\providecommand{\keywords}[1]
{
  \small	
  \textbf{\textit{Keywords---}} #1
}
\newtheorem{theorem}{Theorem}[section]
\newtheorem{lemma}{Lemma}[section]
\newtheorem{corollary}{Corollary}[section]
\newtheorem{assumption}{Assumption}[section]
\newtheorem{remark}{Remark}[section]
\newtheorem{definition}{Definition}[section]
\title{An Efficient Quantum Algorithm for Linear System Problem in Tensor Format}
\author[1,2]{Zeguan Wu}
\author[2]{Sidhant Misra}
\author[1]{Tam\'as Terlaky}
\author[1]{Xiu Yang}
\author[2]{Marc Vuffray}
\affil[1]{Lehigh University}
\affil[2]{Los Alamos National Laboratory}
\date{}
\begin{document}

\maketitle

\begin{abstract}
    Solving linear systems is at the foundation of many algorithms. Recently, quantum linear system algorithms (QLSAs) have attracted great attention since they converge to a solution exponentially faster than classical algorithms in terms of the problem dimension. However, low-complexity circuit implementations of the oracles assumed in these QLSAs constitute the major bottleneck for practical quantum speed-up in solving linear systems. In this work, we focus on the application of QLSAs for linear systems that are expressed as a low rank tensor sums, which arise in solving discretized PDEs. Previous works uses modified Krylov subspace methods to solve such linear systems with a per-iteration complexity being polylogarithmic of the dimension but with no guarantees on the total convergence cost. We propose a quantum algorithm based on the recent advances on adiabatic-inspired QLSA and perform a detailed analysis of the circuit depth of its implementation. We rigorously show that the \emph{total complexity} of our implementation is polylogarithmic in the dimension, which is comparable to the \emph{per-iteration complexity} of the classical heuristic methods.
\end{abstract}

\keywords{quantum computing; linear system; tensor format}

%%%%%%%%%%%%%%%%%%%%%%%%%%%%%
\section{Introduction}\label{sec: intro}
Quantum computing has obtained great attention for its potential in solving certain problems faster than classical algorithms. Starting from Deutsch's theory, see \cite{deutsch1992rapid}, a large number of quantum algorithms have been proposed, including Shor's algorithm for integer factorization, \cite{shor1994algorithms}, Grover's algorithm for Database Search, \cite{grover1996fast}, Quantum Approximate Optimization Algorithm (QAOA) for combinatorial optimization problems, \cite{farhi2014quantum}, QLSAs for solving quantum linear system problem (QLSP), \cite{harrow2009quantum,childs2017quantum,chakraborty2018power,subacsi2019quantum,costa2022optimal,an2022quantum,jennings2023efficient}, and many other algorithms. We refer readers to the survey paper \cite{dalzell2310quantum} for more details.
Among these quantum algorithms, some provide polynomial speed-up while some provide exponential speed-up. QLSAs are a family of algorithms that show exponential convergence speed-up in terms of problem dimension when compared with classical methods including Cholesky factorization, \cite{harrow2009quantum,childs2017quantum,chakraborty2018power,subacsi2019quantum,costa2022optimal,an2022quantum,jennings2023efficient}.
Due to the significance of solving linear systems in numerical computation, a significant amount of research have been invested in using QLSAs to speed up classical algorithms. These efforts include using QLSAs to speed up classical PDE algorithms \cite{liu2021efficient,krovi2023improved,jennings2023cost} and classical optimization algorithms \cite{casares2020quantum,zeguanlcqo,mohammadisiahroudi2023inexact,augustino2023quantum,apers2023quantum}. We refer the reader to the survey paper \cite{abbas2023quantum} for more details on quantum optimization algorithms.
The main caveat in speeding up classical algorithms with QLSAs is the efficient quantum circuit implementation of the oracles assumed for preprocessing the input data and building quantum circuit from the classical system description. Finding such efficient implementations is an area of active research and is crucial for building quantum-classical hybrid algorithms that can provide quantum speed-up.

In this work, we study the use of QLSAs for solving a class of linear systems, whose coefficient matrix and right-hand-side vector can be represented as a linear combination of a few tensor product of 2-by-2 matrices and 2-dimensional vectors, respectively. The problem is a special case of the so-called linear system problem in tensor format (LSP-TF), where matrices/vectors in tensor product are not necessarily 2-dimensional. 
LSP-TF is frequently encountered in discretized PDE problems \cite{grasedyck2013literature}. The problem size grows exponentially as the length of the tensor product chain grows, which makes it difficult to solve using general classic linear system solvers, including Cholesky factorization and Krylov subspaces methods. 
Some classical algorithms have been proposed to solve such LSP-TF \cite{ballani2013projection}. The main idea of these algorithms is to modify the Krylov subspaces methods by taking into account the tensor format of the problem. The resulted Krylov subspaces are constructed approximately, which sacrifices accuracy while keeps the complexity in each iteration polylogarithmic of the problem size. 
Despite the polylogarithmic per-iteration complexity of these algorithms, their total complexity is unknown and is unlikely to be better than that of the original Krylov subspaces for the Krylov subspaces in these algorithms are constructed approximately.
Our main contribution is to show that such linear systems can be efficiently solved using a quantum computer and we provide a full and explicit circuit implementation of the algorithm.

%%%%%%%%%%%%%%%%%%%%%%%%%%%%%
\section{Problem Definition}\label{sec: preliminary}
In this section, we start with notations and then introduce LSP-TF, QLSP, and the Trotterization method. Finally, we summarize our contributions.
% 

%%%%%%%%%%%%%%%%%%%%%%%%%%%%%
\subsection{Notation}\label{sec: notation}
Vectors are denoted by lower case letters and matrices are denoted by upper case letters. 
We use $e_i$ to denote the unit basis vector with the $i$th entry being $1$.
We use $I_n$ to denote the identity matrix with dimension $n\times n$, or simply $I$ if the dimension is obvious from the context. Single-qubit Pauli matrices are $\{I,\ X,\ Y, Z\}$, where
\begin{equation*}
    \begin{aligned}
        X = \begin{bmatrix}
            0&1\\1&0
        \end{bmatrix},\
        Y = \begin{bmatrix}
            0&-i\\i&0
        \end{bmatrix},\
        Z = \begin{bmatrix}
            1&0\\0&-1
        \end{bmatrix}.
    \end{aligned}
\end{equation*}
We use
\begin{itemize}
\item $\|\cdot\|_1$ to denote the trace norm of matrices;
\item $\|\cdot\|_2$ to denote the $\ell_2$ norm for vectors and spectral norm for matrices;
\item $\|\cdot\|_F$ to denote the Frobenius norm for matrices.
\end{itemize}
The condition number of a general matrix $M$ is denoted by $\kappa_M$. Let $p$ be a positive integer. We use $[p]$ to represent the set $\{1,2,\dots, p\}$.

We use $\ket{\cdot}$ to represent quantum state, which can be taken as a column vector in this work. Let $\psi = (\psi_1,\dots,\psi_N)$ be a column vector. We use $\ket{\psi}=\sum_{i=1}^N \psi_i\ket{i}/\sqrt{\sum_{i=1}^N |\psi_i|^2}$ to denote the quantum state representation of $\psi$. We also take the convention that
\begin{equation*}
    \ket{0} = \begin{bmatrix}
        1\\0
    \end{bmatrix}
    {\rm and }
    \ket{1} = \begin{bmatrix}
        0\\1
    \end{bmatrix}.
\end{equation*}

%%%%%%%%%%%%%%%%%%%%%%%%%%%%%
\subsection{LSP-TF}\label{sec: LSP-TF}
The general linear system problem (LSP) is to find $x\in \mathbb{R}^N$ such that
\begin{equation}
    Ax=b, \tag{LSP}\label{def: LSP}
\end{equation}
where $A\in \mathbb{R}^{N\times N}$ and $b\in \mathbb{R}^N$. For matrix $A$, its condition number is denoted by $\kappa_A$.
%according to Section~\ref{sec: notation}. 
For simplicity, we set $\kappa = \kappa_A$.
The LSP-TF problem is to find $x\in \mathbb{R}^N$ such that
\begin{equation}
Ax=b,\ A = \sum_{i=1}^m  \otimes_{k=1}^n A_{ik},\ b = \sum_{j=1}^d  \otimes_{k=1}^n b_{jk}, \tag{LSP-TF}\label{def: LSP-TF}
\end{equation}
where $A\in \mathbb{R}^{N\times N}$ and $b\in \mathbb{R}^N$.	
This type of problems arises frequently in high-dimensional discretized PDEs \cite{ballani2013projection}. 
In \cite{ballani2013projection}, a combination of classical projection method and low-rank tensor format approximation was proposed to solve \ref{def: LSP-TF}. Their main idea of their method is to use low rank tensors to construct subspaces and maintain the low rank tensor structure using hierarchical Tucker format. Their method obtains per-iteration complexity $\mathcal{O}(mn)$ while assuming $d$ is small.
We refer readers to the survey paper \cite{grasedyck2013literature} on these low tensor rank iterative methods.
%
%In \cite{gu2023deep}, the authors introduce a neural network based algorithm and demonstrate effectiveness using numerical results.

\subsection{Contributions}
In this work, we apply Algorithm~\ref{algo: qlsa} to solve problem \ref{def: LSP-TF}. We focus on the special case when $A = \sum_{i=1}^m  \otimes_{k=1}^n A_{ik}$ and $b = \sum_{j=1}^d  \otimes_{k=1}^n b_{jk}$ with $A_{ik}\in \mathbb{R}^{2\times 2}$ being Hermitian and $b_{jk}\in \mathbb{R}^2$. Without loss of generality, we make the following assumption.
\begin{assumption}\label{assumption: A b norm 1}
$\|A\|_2\leq 1$, $\|\otimes_{k=1}^n A_{ik}\|_2 \leq 1$ for all $i\in[m]$, and $\|\otimes_{k=1}^n b_{jk}\|_2\leq 1$ for all $j\in[d]$.
\end{assumption}
We also assume $A$ and $b$ are sparse in tensor product strings.
\begin{assumption}
    $m=\mathcal{O}(1)$ and $d=\mathcal{O}(1)$.
\end{assumption}

We propose a Hamiltonian decomposition 
for the Hamiltonian used in Algorithm~\ref{algo: qlsa} when solving \ref{def: LSP-TF} and a detailed circuit implementation for the simulation of the decomposed Hamiltonians. We combine the circuits and the Trotterization method to implement Algorithm~\ref{algo: qlsa}. We show the circuit depth in our implementation of the QLSA is polylogarithmic of the problem dimension. 
Here we provide a simplified version of our main result. The full statement is provided in Theorem~\ref{theorem: main}.
\begin{theorem}\label{theorem: simple main}
Let $0<\epsilon\leq 1/(3n)$, $\epsilon_0 = \epsilon^2/(\kappa\log^2\kappa)$ and $p=\lceil 1- \log_2\epsilon_0 \rceil$. Our implementation of Algorithm~\ref{algo: qlsa} prepares an $\mathcal{O}(\epsilon)$-approximate solution using $\mathcal{T}$  classical arithmetic operations, $\mathcal{T}$ single qubit unitary circuits \textcolor{black}{ and their controlled version with gate depth $\mathcal{O}(1)$}, and $\mathcal{T}$ calls to $p$-qubit quantum multiplier \textcolor{black}{ and their controlled version with circuit depth $\mathcal{O}(\mathcal{T}\mathcal{T}_p)$}, where $\mathcal{T} = \mathcal{O}({\rm poly}(\kappa\log (N)/\epsilon))$ \textcolor{black}{and $\mathcal{T}_p = \mathcal{O}(p^3) $ is the gate depth of $p$-qubit quantum multiplier}. 
\end{theorem}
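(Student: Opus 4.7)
The overall plan is to combine the complexity guarantee of Algorithm~\ref{algo: qlsa}, which requires $\widetilde{\mathcal{O}}(\kappa \,\mathrm{polylog}(1/\epsilon))$ calls to a Hamiltonian-simulation oracle and a state-preparation oracle, with explicit polylogarithmic-depth implementations of those two oracles that exploit the tensor-product structure guaranteed by Assumption~\ref{assumption: A b norm 1}. The theorem then follows by multiplying the outer oracle complexity by the per-oracle depth and by checking that the accumulated approximation error is $\mathcal{O}(\epsilon)$.

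First I would reduce Hamiltonian simulation for the relevant Hamiltonian $H$, built from $A$ and $\ket{b}\bra{b}$, to the simulation of $\mathcal{O}(1)$ tensor-product summands of the form $\otimes_{k=1}^n M_{ik}$, where each $M_{ik}$ is a $2\times 2$ Hermitian matrix; note that $\ket{b}\bra{b}$ itself is a sum of $d^2=\mathcal{O}(1)$ such summands. A Lie-Trotter splitting decomposes $e^{-itH}$ into a product of simulations of these summands, and the constant number of terms keeps the required Trotter step count $r$ at $\mathrm{poly}(\kappa/\epsilon)$. To simulate a single summand $\otimes_k M_{ik}$, I would classically diagonalize each site as $M_{ik} = U_{ik} D_{ik} U_{ik}^\dagger$, conjugate by the single-qubit layer $\otimes_k U_{ik}$, and thereby reduce to simulating the diagonal Hamiltonian $\otimes_k D_{ik}$. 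On the basis state $\ket{x_1\ldots x_n}$, this diagonal operator has eigenvalue $\prod_{k=1}^n d_{ik,x_k}$; I would compute this product into a $p$-qubit ancilla by $n-1$ sequential calls to a $p$-qubit quantum multiplier, apply a phase rotation $e^{-it(\cdot)}$ controlled on the ancilla, and uncompute. This single-summand simulation accounts for $\mathcal{O}(n)$ multiplier calls of depth $\mathcal{T}_p=\mathcal{O}(p^3)$, while the $\otimes_k U_{ik}$ layer is a constant-depth single-qubit circuit.

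Next, I would prepare $\ket{b}$ by the same tensor trick: each $\otimes_k b_{jk}$ is an $n$-qubit product state requiring only $n$ single-qubit preparations, and the $d=\mathcal{O}(1)$ summands are combined through a constant-size ancilla via a linear combination of unitaries, giving polylogarithmic-in-$N$ depth. Multiplying the $\widetilde{\mathcal{O}}(\kappa\,\mathrm{polylog}(1/\epsilon))$ outer oracle count by the $r$ Trotter steps and the $\mathcal{O}(n)$ per-step multiplier calls produces the advertised $\mathcal{T}=\mathcal{O}(\mathrm{poly}(\kappa\log N/\epsilon))$, and the $\mathcal{O}(\mathcal{T}\mathcal{T}_p)$ multiplier depth follows since the $n-1$ multiplications within each diagonal simulation are chained sequentially through the ancilla. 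The main obstacle is end-to-end error bookkeeping: a $p$-qubit multiplier introduces rounding of order $2^{-p}$ per call, and these errors must be propagated through the $\Theta(n)$ multiplications per Trotter step, the $r$ Trotter steps, and then amplified by the QLSA's $\kappa$-sensitivity to Hamiltonian-simulation error. The prescribed precision $p=\lceil 1-\log_2\epsilon_0\rceil$ with $\epsilon_0=\epsilon^2/(\kappa\log^2\kappa)$ is exactly the buffer required for the total solution error to remain $\mathcal{O}(\epsilon)$, and verifying this chain of inequalities together with the restriction $\epsilon\le 1/(3n)$ will be the technical crux of the proof.
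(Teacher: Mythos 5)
Your high-level plan matches the paper's strategy: Trotterize $H(s)$ into tensor-product summands, classically diagonalize each $2\times 2$ factor, implement $\bigotimes_k U_{ik}$ as a single-qubit layer, compute the diagonal eigenvalue into an ancilla by $\mathcal{O}(n)$ calls to a $p$-qubit quantum multiplier, apply a controlled phase, and do error bookkeeping with $p=\lceil 1-\log_2\epsilon_0\rceil$. The initial-state preparation via tensor structure plus a small LCU register is also the paper's approach.

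There is, however, a genuine gap in your decomposition claim. You assert that the Hamiltonian $H(s)$ reduces to $\mathcal{O}(1)$ summands $\bigotimes_{k=1}^n M_{ik}$ with each $M_{ik}$ a $2\times 2$ \emph{Hermitian} matrix, and that $\ket{b}\bra{b}$ is a sum of $d^2$ such summands. Neither statement holds once the projector $P_{\bar b}^\perp = I - \ket{\bar b}\bra{\bar b}$ enters. The cross terms $b_{j_1 k} b_{j_2 k}^\dag$ for $j_1\neq j_2$ are rank-one $2\times 2$ matrices that are not Hermitian, and more seriously, the terms $A(s)\ket{\bar b}\bra{\bar b}$ produce site-wise factors such as $A_{ik}\,b_{j_1 k} b_{j_2 k}^\dag$, which are non-normal in general. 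Your proposed step ``classically diagonalize each site as $M_{ik}=U_{ik}D_{ik}U_{ik}^\dag$ and conjugate by a single-qubit layer'' requires a unitary similarity to a real diagonal matrix, which simply does not exist for these factors; the eigenvalues can be complex or the matrix can be defective. So the procedure you describe for a generic summand would fail on a constant fraction of the terms that actually appear.

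The paper resolves this by introducing what it calls Type-2 Hamiltonians: it keeps the anti-diagonal block form
\begin{equation*}
\begin{bmatrix}
0 & P_0\otimes D_1\otimes\cdots\otimes D_n \\
P_0^\dag\otimes D_1^\dag\otimes\cdots\otimes D_n^\dag & 0
\end{bmatrix},
\end{equation*}
which is Hermitian as a whole even though the $D_i$ are not, takes a site-wise \emph{singular value} decomposition $D_i = U_{D_i}\Sigma_{D_i}V_{D_i}^\dag$, conjugates by the block-diagonal unitary $\mathrm{diag}(U_{H^2},V_{H^2})$ (implemented as controlled single-qubit gates), and is then left to simulate $e^{-itX\otimes\Sigma_{H^2}}$. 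One more eigendecomposition, this time of the $2\times 2$ Pauli $X$ on the leading qubit, reduces this to a genuine diagonal tensor product where your multiplier-plus-phase circuit applies. This extra two-level reduction (SVD on each site, then block embedding, then the $X$ eigendecomposition) is the missing piece; without it, your argument only covers the summands coming from $A(s)$ alone, not those coming from $A(s)P_{\bar b}^\perp$ and $P_{\bar b}^\perp A(s)$. The rest of your sketch, including the error-propagation program and the choice of $\epsilon_0$, is in line with the paper's proof.
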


\begin{remark}
    The final result obtained from our implementation is a quantum state. One needs to apply quantum tomography algorithm to read out any entries.
\end{remark}

The results suggest that the time complexity of our implementation is polylogarithmic in the problem dimension if the time complexity of each gate is $\mathcal{O}(1)$. This is an exponential speed-up compared with classical general linear system algorithms, including Gaussian elimination, Cholesky factorization, and Krylov subspaces methods. When compared with classical algorithms designed for \ref{def: LSP-TF}, for example the algorithm introduced in \cite{ballani2013projection}, our total time complexity is comparable to their per-iteration complexity.

In the remaining of this section, we introduce QLSA and Trotterization method.
%

%  

%%%%%%%%%%%%%%%%%%%%%%%%%%%%%
\subsection{QLSA}\label{sec: QLSA}
In this section, we start with the definition of QLSP and give a brief introduction of the QLSA proposed in \cite{subacsi2019quantum}.
\begin{definition}[QLSP]\label{def: QLSP}
Given a Hermitian matrix $A\in \mathbb{R}^{N\times N}$ and $b\in \mathbb{R}^{N}$, the QLSP is to find an approximation of the quantum state
\begin{equation*}
	\ket{x} = \frac{\sum_{i=1}^N x_i \ket{i} }{\sqrt{\sum_{i=1}^N |x_i|^2 }},
\end{equation*}
where $(x_1, \dots, x_N)^\top = A^{-1}b$. 
Quantum state $\ket{\Tilde{x}}$ is called an $\epsilon$-approximate solution to the QLSP if
\begin{equation*}
	\left\| \ket{\Tilde{x}} - \ket{x} \right\|_2 \leq \epsilon.
\end{equation*}
\end{definition}

In \cite{harrow2009quantum}, the authors propose the first QLSA to solve QLSP with polylogarithmic dependence on dimension when the problem is sparse. Later, researchers proposed different QLSAs with the sparsity condition relaxed and better dependence on other parameters including condition number of linear system and solution accuracy \cite{childs2017quantum, chakraborty2018power,subacsi2019quantum,an2022quantum,costa2022optimal,jennings2023efficient}.
In \cite{subacsi2019quantum}, the authors propose a QLSA inspired by adiabatic quantum computing to solve QLSP. They study two types of Hamiltonians and use the randomization method introduced by \cite{boixo2009eigenpath} to design two discretized adiabatic-like quantum algorithms. Algorithm~\ref{algo: qlsa} is the one of the two QLSAs with better complexity. Later in \cite{jennings2023efficient}, an improved version of Algorithm~\ref{algo: qlsa} is proposed with better dependence on the solution accuracy. In this work, we opt to study Algorithm~\ref{algo: qlsa} for its simplicity. For the remainder of this section,we give a brief explanation of Algorithm~\ref{algo: qlsa} and we refer the reader to \cite{subacsi2019quantum} for further details.

Given a linear system problem $Ax=b$, let
\begin{equation*}
    \begin{aligned}
        A(s) = (1-s) Z\otimes I + s X \otimes A
    \end{aligned}
\end{equation*}
and
\begin{equation*}
    \begin{aligned}
        \ket{\bar{b}} &= \frac{\sqrt{2}}{2}(\ket{0}+\ket{1})\otimes\ket{b},\ 
        P_{\bar{b}}^\perp &= I - \ket{\Bar{b}}\bra{\bar{b}},
    \end{aligned}
\end{equation*}
where $s\in [0,1]$ is the evolution parameter.
With these, they introduce the following family of quantum states
\begin{equation*}
	\ket{x(s)} = \frac{\sum_{i=1}^{2N} x_i(s) \ket{i} }{\sqrt{\sum_{i=1}^{2N} |x_i(s)|^2 }},
\end{equation*}
where $(x_1(s), \dots, x_{2N}(s))^\top = A(s)^{-1}\bar{b}$.
They then study the following Hamiltonian \begin{equation}\label{eq: qlsa Hamiltonian}
    \begin{aligned}
        H(s) = \left(\frac{X+iY}{2} \right) \otimes A(s) P_{\Bar{b}}^\perp + \left(\frac{X-iY}{2} \right) \otimes P_{\Bar{b}}^\perp A(s),\ s \in [0,1].
    \end{aligned}
\end{equation}
They prove that, if one starts with quantum state $\ket{0}\otimes \ket{x(0)}$ and lets the quantum system evolve adiabatically, then the final state will be $\ket{0}\otimes \ket{x(1)}$ with sufficiently high probability. 
\textcolor{black}{Notice that, in our framework, the initial state is $\frac{\sqrt{2}}{2}(\ket{0}- \ket{1})\otimes\ket{b}$, where $\ket{b}$ has low tensor rank. This allows us to prepare the initial state efficiently, see the proof of Theorem~\ref{theorem: main} for details.}
They also prove that the relevant spectral gap of the Hamiltonian is bounded from below by the following quantity
\begin{equation*}
	\sqrt{\Delta^\ast(s)} = \sqrt{(1-s)^2 + (s/\kappa_A)^2}.
\end{equation*}
They propose to use the phase randomization method proposed in \cite{boixo2009eigenpath} to turn the continuous adiabatic evolution into piece-wise time-independent evolution. The continuous evolution parameter $s\in [0,1]$, which is time dependent, is discretized into $q$ discretization points $s_j$ for $j\in [q]$.
To determine the discretization of $s$, they parametrize $s$ using
\begin{equation}\label{eq: s parametrization}
s(v) = \frac{e^{v(\sqrt{1+\kappa^2}/\sqrt{2\kappa^2})} + 2\kappa^2 - \kappa^2 e^{-v(\sqrt{1+\kappa^2}/\sqrt{2\kappa^2})}}{2(1+\kappa^2)}.
\end{equation}
Then they discretize $s$ by uniformly discretizing the value of $v$ between $v_a$ and $v_b$, where
\begin{equation*}\label{eq: vavb}
	\begin{aligned}
		v_a &= \frac{\sqrt{2}\kappa}{\sqrt{1+\kappa^2}}\log\left( \kappa\sqrt{1+\kappa^2} - \kappa^2  \right)\\
		v_b &= \frac{\sqrt{2}\kappa}{\sqrt{1+\kappa^2}}\log\left( \sqrt{1+\kappa^2} + 1  \right).
	\end{aligned}
\end{equation*}
At each discretization point $s_j$, the quantum system is evolved using Hamiltonian $H(s_j)$ for time $t_j$, which is sampled from the uniform distribution $t_j \sim [0, 2\pi/\sqrt{\Delta^\ast(s_j)}]$.
The pseudocode of their algorithm is provided in Algorithm~\ref{algo: qlsa}.
\begin{algorithm}
\caption{QLSA for QLSP by \cite{subacsi2019quantum}} 
\label{algo: qlsa}
\begin{algorithmic}[1]
\State Given $A,\ b,\ \kappa,$ and $\epsilon$
\State Compute $v_a$ and $v_b$ by Eq.~\eqref{eq: vavb}
\State Set $q = \Theta\left( \log^2(\kappa)/\epsilon \right)$ and $\delta = (v_b - v_a)/q$
\State For $j=1, \dots, q$, let $v_j = v_a + j\delta,\ s_j = s(v_j)$, and $t_j$ be sampled uniformly from $[0, 2\pi/\sqrt{\Delta^\ast(s_j)}]$
\State Apply $e^{-it_q H(s_q)} \cdots e^{-it_1 H(s_1 )}$ to $\ket{0}\otimes \ket{x(0)}$, discard the ancilla
\end{algorithmic}
\end{algorithm}
As mentioned in \cite{subacsi2019quantum}, the algorithm is inspired by adiabatic quantum computing and thus its run time can be measured by the evolution time needed. The time complexity is $T = \mathcal{O}(\kappa\log(\kappa)/\epsilon)$. 
However, as pointed out by the authors, the Hamiltonian $H(s)$ easily encodes the information of QLSP but not necessarily corresponds to any physical model, which makes the implementation of the algorithm on analog quantum computers difficult.
The authors then introduce a gate-model implementation of Algorithm~\ref{algo: qlsa} and analyze its complexity.
With the oracle access to problem information $A$ and $b$, the authors obtain the query complexity $\tilde{\mathcal{O}}(d\kappa/\epsilon)$, with polylogarithmic factors of $d\kappa/\epsilon$ hidden.

Despite the achievement on complexities, Algorithm~\ref{algo: qlsa} assumes the existence of oracle access to $A$, $b$, and several other non-trivial unitaries. Constructing these oracles from classical input data $(A, b)$ could be more difficult than solving the problem using the QLSA.
In this work, we propose a simpler implementation of Algorithm~\ref{algo: qlsa} and estimate the implementation cost when it is applied to \ref{def: LSP-TF}.

%

%%%%%%%%%%%%%%%%%%%%%%%%%%%%%
\subsection{Trotterization Method}\label{sec: trotter}
In this section, we give a brief introduction to the Trotterization method and its impact on Hamiltonian evolution. The detailed impact of Trotterization method on Algorithm~\ref{algo: qlsa} is analyzed in Section~\ref{sec: trotter cost analysis}.

Consider the time evolution of a time-independent Hamiltonian $M$
, i.e., $e^{-iMt}.$
Since $M$ is a general Hermitian matrix, it is difficult to design the circuit for $e^{-iMt}$.
The main idea of Trotterization methods is that one can consider a decomposition of Hamiltonian $M$, e.g.,
\begin{equation*}
    M = \sum_{\gamma=1}^\Gamma M_{\gamma},
\end{equation*}
where the time evolution of each $M_j$ are are assumed to be easy to compile. Then, for short enough evolution time, one can approximate the Hamiltonian evolution using
\begin{equation*}
    \begin{aligned}
        e^{-it\sum_{\gamma=1}^\Gamma M_{\gamma}} \approx \prod_{\gamma=1}^\Gamma e^{-it M_{\gamma}},
    \end{aligned}
\end{equation*}
which is the first-order Lie-Trotter formula. 
The accuracy of the Trotterization methods is extensively studied in the literature, see e.g.,  \cite{lloyd1996universal,berry2007efficient,childs2019nearly,childs2021theory}. Here we cite Theorem 6 from \cite{childs2021theory} on the accuracy of general Trotterization methods.
% %
We restate it by considering $e^{-iMt}$ instead of $e^{Mt}$ and only for a Hermitian matrix $M$.
\begin{lemma}[Restated Theorem 6 of \cite{childs2021theory}]\label{lemma: childs trotter main}
    Let $M = \sum_{\gamma=1}^\Gamma M_\gamma$ be an Hermitian operator consisting of $\Gamma$ summands with each $M_{\gamma}$ Hermitian and $t\geq 0$. Let $\mathcal{S}(t)$ be a $p$-th order $\Upsilon$-stage product formula as 
    \begin{equation*}
        \begin{aligned}
            \mathcal{S}(t) = \prod_{\upsilon=1}^\Upsilon \prod_{\gamma=1}^\Gamma e^{-it a_{\upsilon, \gamma} M_{\upsilon, \gamma}},
        \end{aligned}
    \end{equation*}
    where $a_{\upsilon, \gamma}$ and $M_{\upsilon, \gamma}$ are to be specified by certain product formula.
    Define $\Tilde{\alpha}_{\rm comm} = \sum_{\gamma_1=1}^\Gamma \cdots \sum_{\gamma_{p+1}=1}^\Gamma \left\| \left[ M_{\gamma_{p+1}}, \cdots \left[M_{\gamma_2}, M_{\gamma_1} \right] \cdots \right] \right\|_2 $. Then, the additive error $\mathcal{E}_{\mathcal{A}}(t)$ and the multiplicative error $\mathcal{E}_{\mathcal{M}}(t)$, defined, respectively, by $\mathcal{S}(t) = e^{-iMt} + \mathcal{E}_{\mathcal{A}}(t)$ and $\mathcal{S}(t) = e^{-iMt}\left( I + \mathcal{E}_{\mathcal{M}}(t) \right)$, can be asymptotically bounded as
    \begin{equation*}
        \begin{aligned}
            \left\| \mathcal{E}_{\mathcal{A}}(t) \right\|_2, \left\| \mathcal{E}_{\mathcal{M}}(t) \right\|_2 = \mathcal{O}\left( \Tilde{\alpha}_{\rm comm} t^{p+1}\right).
        \end{aligned}
    \end{equation*}
\end{lemma}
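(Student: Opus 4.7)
The plan is to reduce this statement to the Hermitian form of Theorem 6 of \cite{childs2021theory}. The original theorem is formulated for evolutions $e^{tM_\gamma}$ with $M_\gamma$ anti-Hermitian, whereas here each generator is Hermitian and enters as $e^{-itM_\gamma}$. Setting $\tilde M_\gamma = -iM_\gamma$ makes each $\tilde M_\gamma$ anti-Hermitian and gives $e^{-itM_\gamma}=e^{t\tilde M_\gamma}$. A $(p+1)$-fold nested commutator of the $\tilde M_\gamma$'s differs from the corresponding commutator of the $M_\gamma$'s by the overall phase $(-i)^{p+1}$, which has modulus one; hence $\tilde\alpha_{\rm comm}$ as defined in the lemma coincides with the commutator quantity used in \cite{childs2021theory}. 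Thus the bound carries over verbatim, and the task reduces to sketching why the original estimate holds.

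The proof in \cite{childs2021theory} proceeds in three steps, which I would reproduce as follows. First, Taylor expand both $e^{-iMt}$ and $\mathcal{S}(t)$ around $t=0$; by the defining property of a $p$-th order product formula, the two expansions agree through order $t^p$, so $\mathcal{S}(t)-e^{-iMt}=\mathcal{O}(t^{p+1})$. Second, use the integral remainder form of Taylor's theorem to write the multiplicative error as
\begin{equation*}
\mathcal{E}_{\mathcal{M}}(t) \;=\; \frac{1}{p!}\int_0^t (t-\tau)^p \frac{d^{p+1}}{d\tau^{p+1}}\Bigl[\,e^{iM\tau}\,\mathcal{S}(\tau)\,\Bigr]\,d\tau.
\end{equation*}
Third, expand this $(p+1)$-th derivative with the Leibniz rule and the stage-by-stage product structure of $\mathcal{S}(\tau)$; this produces a sum indexed by tuples $(\gamma_1,\dots,\gamma_{p+1})\in[\Gamma]^{p+1}$ whose generic term is, before simplification, an arbitrary product of generators. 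The crucial algebraic observation is that the $p$-th order matching property forces every non-commutator monomial in this sum to cancel, leaving only $(p+1)$-fold nested commutators of the $M_{\gamma_j}$.

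Granted this commutator representation, the norm estimate is routine. The factors $e^{\pm iM\tau}$ have spectral norm one because $M$ is Hermitian; the inner exponential factors $e^{-i\tau a_{\upsilon,\gamma}M_{\upsilon,\gamma}}$ inside $\mathcal{S}(\tau)$ are likewise unitary and drop out under $\|\cdot\|_2$; the triangle inequality over the tuples $(\gamma_1,\dots,\gamma_{p+1})$ produces exactly $\tilde\alpha_{\rm comm}$ (up to a product-formula-dependent constant coming from the coefficients $a_{\upsilon,\gamma}$); and the integration of $(t-\tau)^p$ over $[0,t]$ contributes $t^{p+1}/(p+1)!$. Combining these gives $\|\mathcal{E}_{\mathcal{M}}(t)\|_2 = \mathcal{O}(\tilde\alpha_{\rm comm}\,t^{p+1})$. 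The additive bound then follows from $\mathcal{E}_{\mathcal{A}}(t)=e^{-iMt}\mathcal{E}_{\mathcal{M}}(t)$ and $\|e^{-iMt}\|_2=1$.

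The hard part is the third step: showing that, after subtracting the Taylor series of $e^{-iMt}$, every surviving monomial in the generators reassembles into a nested commutator. This combinatorial identity is the technical core of \cite{childs2021theory} and is established there via a careful bookkeeping of the multivariate integral representation of ordered exponentials. Rather than reproduce that argument, I would cite the corresponding lemma from Section 3 of \cite{childs2021theory}, which already encapsulates exactly what is needed, and simply verify that the anti-Hermitian change of variables above is compatible with its hypotheses.
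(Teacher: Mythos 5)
Your reduction is sound and matches the paper's own treatment, which does not prove this lemma at all but simply restates Theorem 6 of Childs et al.\ under the substitution $e^{tM}\mapsto e^{-itM}$ for Hermitian $M$. Your observation that the change of variables $\tilde M_\gamma = -iM_\gamma$ maps Hermitian generators to anti-Hermitian ones, that each $(p+1)$-fold nested commutator then picks up an overall phase $(-i)^{p+1}$ of unit modulus, and that $\tilde\alpha_{\rm comm}$ is therefore unchanged, is exactly the justification needed for the restatement; the additional sketch of the integral-remainder and commutator-cancellation argument from Childs et al.\ goes beyond what the paper provides but is consistent with the cited source.
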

\begin{remark}\label{remark: alpha communicative factor}
In Lemma \ref{lemma: childs trotter main}, $a_{\upsilon, \gamma}$ and $M_{\upsilon, \gamma}$ are determined in the corresponding Trotterization formula. In this paper, we work with the aforementioned first-order Lie-Trotter formula and thus
\begin{equation*}
	\Upsilon = 1,\ a_{\upsilon, \gamma}=1,\ M_{\upsilon, \gamma} = M_{\gamma},
\end{equation*}
and 
\begin{equation*}
	\begin{aligned}
		\mathcal{S}(t) = \prod_{\gamma=1}^\Gamma e^{-it M_{\gamma}}.
	\end{aligned}
\end{equation*}
The commutator scaling factor becomes
\begin{equation*}
\begin{aligned}
    \Tilde{\alpha}_{\rm comm} &= \sum_{\gamma_1=1}^\Gamma\sum_{\gamma_2=1}^{\Gamma} \left\|[M_{\gamma_1}, M_{\gamma_2}] \right\|_2\\
    &= \sum_{\gamma_1=1}^\Gamma\sum_{\gamma_2=1}^{\Gamma} \left\|M_{\gamma_1}M_{\gamma_2} - M_{\gamma_2}M_{\gamma_1} \right\|_2\\
    &\leq \sum_{\gamma_1=1}^\Gamma\sum_{\gamma_2=1}^{\Gamma} \left\|M_{\gamma_1}M_{\gamma_2}\right\|_2 + \left\| M_{\gamma_2}M_{\gamma_1} \right\|_2\\
    &\leq 2\sum_{\gamma_1=1}^\Gamma\sum_{\gamma_2=1}^{\Gamma} \left\|M_{\gamma_1}\right\|_2 \left\| M_{\gamma_2} \right\|_2.
\end{aligned}
\end{equation*}
\end{remark} 
Lemma \ref{lemma: childs trotter main} applies for general Hamiltonians and thus tighter bounds are possible for well-structured Hamiltonians. In Section \ref{sec: trotter cost analysis}, we discuss the bound of the commutator scaling factor $\Tilde{\alpha}_{\rm comm}$ when the first-order Lie-Trotter formula is applied in the QLSA for LSP-TF. With $\tilde{\alpha}_{\rm comm}$ quantified, one can find the Trotter number. We conclude this section by restating Corollary 7 on Trotter number in \cite{childs2021theory}.
\begin{lemma}[Restated Corollary 7 in \cite{childs2021theory}]\label{lemma: trotter number}
Let $M = \sum_{\gamma=1}^\Gamma M_\gamma$ be a Hermitian operator consisting of $\Gamma$ summands with each $M_{\gamma}$ Hermitian and $t\geq 0$. Let $\mathcal{S}(t)$ be a $p$-th order $\Upsilon$-stage product formula as 
    \begin{equation*}
        \begin{aligned}
            \mathcal{S}(t) = \prod_{\upsilon=1}^\Upsilon \prod_{\gamma=1}^\Gamma e^{-it a_{\upsilon, \gamma} M_{\upsilon, \gamma}}.
        \end{aligned}
    \end{equation*}
Define $\Tilde{\alpha}_{\rm comm} = \sum_{\gamma_1=1}^\Gamma \cdots \sum_{\gamma_{p+1}=1}^\Gamma \left\| \left[ M_{\gamma_{p+1}}, \cdots \left[M_{\gamma_2}, M_{\gamma_1} \right] \cdots \right] \right\|_2 $. Then, we have $\|\mathcal{S}^r(t/r) - e^{-itM}\|_2 = \mathcal{O}(\epsilon)$, provided that 
\begin{equation*}
r = \mathcal{O}\left( \frac{\tilde{\alpha}_{\rm comm}^{1/p} t^{1+1/p}}{\epsilon^{1/p}}  \right).
\end{equation*}
\end{lemma}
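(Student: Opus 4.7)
The plan is to reduce the Trotter-number claim to the single-step error bound already established in Lemma~\ref{lemma: childs trotter main} via a standard telescoping argument over unitaries. First, I would apply Lemma~\ref{lemma: childs trotter main} to a single step of duration $t/r$ in place of $t$, obtaining
\begin{equation*}
\bigl\|\mathcal{S}(t/r) - e^{-iM(t/r)}\bigr\|_2 = \mathcal{O}\!\left( \tilde{\alpha}_{\rm comm}\,(t/r)^{p+1}\right).
\end{equation*}
This is the only analytic ingredient; everything else is algebraic manipulation.

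Next I would exploit the fact that both $U:=\mathcal{S}(t/r)$ and $V:=e^{-iM(t/r)}$ are unitary, so $\|U\|_2 = \|V\|_2 = 1$. The identity
\begin{equation*}
U^r - V^r = \sum_{k=0}^{r-1} U^{k}\,(U-V)\,V^{r-1-k}
\end{equation*}
together with submultiplicativity of the spectral norm yields the telescoping bound
\begin{equation*}
\bigl\|\mathcal{S}^r(t/r) - e^{-itM}\bigr\|_2 \leq r\,\bigl\|\mathcal{S}(t/r)-e^{-iM(t/r)}\bigr\|_2 = \mathcal{O}\!\left(\frac{\tilde{\alpha}_{\rm comm}\,t^{p+1}}{r^{p}}\right).
\end{equation*}

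Finally, I would demand that the right-hand side be $\mathcal{O}(\epsilon)$ and solve for $r$: setting $\tilde{\alpha}_{\rm comm}\,t^{p+1}/r^{p} \lesssim \epsilon$ gives $r = \mathcal{O}\!\bigl(\tilde{\alpha}_{\rm comm}^{1/p}\,t^{1+1/p}/\epsilon^{1/p}\bigr)$, which is exactly the stated bound.

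I do not anticipate any real obstacle here; the argument is essentially a corollary of Lemma~\ref{lemma: childs trotter main}. The only mild subtlety is to invoke the single-step bound with the rescaled time $t/r$ rather than $t$, and to justify the telescoping by noting that the product formula $\mathcal{S}$, being a composition of unitary exponentials $e^{-it a_{\upsilon,\gamma}M_{\upsilon,\gamma}}$ with Hermitian $M_{\upsilon,\gamma}$, is itself unitary so that its spectral norm equals one. One could alternatively use the multiplicative-error form $\mathcal{S}(t/r)=e^{-iM(t/r)}(I+\mathcal{E}_{\mathcal{M}}(t/r))$ provided in Lemma~\ref{lemma: childs trotter main} and expand the product $\prod_{k=1}^{r}(I+\mathcal{E}_{\mathcal{M}}(t/r))$, keeping only the first-order terms since higher-order ones contribute at most $\mathcal{O}(r^2\|\mathcal{E}_{\mathcal{M}}\|^2)$ and hence become subleading for the choice of $r$ above; both approaches lead to the same scaling.
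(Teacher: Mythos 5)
The paper does not prove this lemma at all; it is explicitly presented as ``Restated Corollary 7 in \cite{childs2021theory}'' and the paper simply cites it, just as it does Lemma~\ref{lemma: childs trotter main}. Your telescoping argument is correct and is in fact the standard derivation of Corollary~7 from Theorem~6 in the original Childs--Su--Tran--Wiebe--Zhu paper, so you have supplied a valid proof of a result the authors chose to quote without one. Two minor points of care: (i) the telescoping identity $U^r - V^r = \sum_{k=0}^{r-1} U^{k}(U-V)V^{r-1-k}$ requires only that $U$ and $V$ be operators, but the bound $\|U^r-V^r\|_2 \le r\|U-V\|_2$ uses $\|U\|_2=\|V\|_2=1$, which holds because each factor $e^{-it a_{\upsilon,\gamma}M_{\upsilon,\gamma}}$ is unitary (the $M_{\upsilon,\gamma}$ being Hermitian and $a_{\upsilon,\gamma}$ real), as you correctly observe; (ii) the asymptotic single-step bound from Lemma~\ref{lemma: childs trotter main} is a small-time statement, which is exactly the regime $t/r \to 0$ in which you invoke it, so there is no hidden issue with the implicit constant. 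Your alternative route via the multiplicative error $\mathcal{E}_{\mathcal{M}}$ is also valid and gives the same scaling.
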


The remaining of the paper is organized as follows.  In Section~\ref{sec: circuit design}, we describe our circuit design in the implementation. 
In Section~\ref{sec: circuit analysis}, we analyze the cost to implement the proposed circuits, and finally give the total cost of the implementation.

\section{QLSA Circuit Design for LSP-TF}\label{sec: circuit design}
In this section, we show how to decompose  Hamiltonian \eqref{eq: qlsa Hamiltonian} into two types of structured Hamiltonians when Algorithm~\ref{algo: qlsa} is applied on \ref{def: LSP-TF}. Then we show how to implement the simulation of these two types of Hamiltonians and give their cost estimations.

Following the notation introduced in Section \ref{sec: LSP-TF} and Section \ref{sec: QLSA}, we have
\begin{equation*}
    \begin{aligned}
        A(s) &= (1-s) Z \otimes ( \otimes_{k=1}^n I) + s X \otimes \sum_{i=1}^m  \otimes_{k=1}^n A_{ik}\\
        \ket{\bar{b}} &= \frac{1}{\sqrt{2}}\begin{bmatrix}
            b\\b
        \end{bmatrix} = \frac{1}{\sqrt{2}}\begin{bmatrix}
            \sum_{j=1}^d  \otimes_{k=1}^n b_{jk}\\\sum_{j=1}^d  \otimes_{k=1}^n b_{jk}
        \end{bmatrix}\\
        P_{\bar{b}}^\perp &= I_{2N} - \ket{\bar{b}} \bra{\bar{b}} = I_{2N} - \frac{1}{2} \begin{bmatrix}
            bb^\dag &bb^\dag\\
            bb^\dag &bb^\dag
        \end{bmatrix},
    \end{aligned}
\end{equation*}
where
\begin{equation*}
bb^\dag = \sum_{j_1, j_2=1}^d b_{j_1 \cdot} b_{j_2 \cdot}^\dag = \sum_{j_1, j_2=1}^d\otimes_{k=1}^n b_{j_1k}b_{j_2k}^\dag.
\end{equation*}
Here $b_{j_1 \cdot}$ and $b_{j_2 \cdot}$ represent the corresponding tensor product strings. We also use the similar notation $A_{i\cdot}$.
%
%It follows that
%\begin{equation*}
%    \begin{aligned}
%        A(s) P_{\bar{b}}^\perp &= A(s) \left(I_{2N} - \ket{\bar{b}} \bra{\bar{b}} \right)\\
%        &= A(s) - A(s) \ket{\bar{b}} \bra{\bar{b}}\\
%        &= (1-s) Z \otimes I + sX\otimes A -\frac{1}{2} (1-s) (Z \otimes I) \begin{bmatrix}
%            bb^\dag & bb^\dag\\
%            bb^\dag & bb^\dag
%        \end{bmatrix} -\frac{1}{2} s (X\otimes A) \begin{bmatrix}
%            bb^\dag & bb^\dag\\
%            bb^\dag & bb^\dag
%        \end{bmatrix}\\
%        &= (1-s) Z \otimes I + sX\otimes A -\frac{1}{2} (1-s) \begin{bmatrix}
%            bb^\dag & bb^\dag\\
%            -bb^\dag & -bb^\dag
%        \end{bmatrix} -\frac{1}{2} s  \begin{bmatrix}
%            Abb^\dag & Abb^\dag\\
%            Abb^\dag & Abb^\dag
%        \end{bmatrix}\\
%        &= (1-s) Z \otimes I + sX\otimes A -\frac{1}{2} (1-s) (Z + iY) \otimes bb^\dag -\frac{1}{2} s (I + X) \otimes Abb^\dag
%    \end{aligned}
%\end{equation*}
% 
With these notations, the Hamiltonian $H(s)$ can be written as
\begin{equation}\label{eq: Hamiltonian decomposition}
    \begin{aligned}
        H(s) &= \left(\frac{X+iY}{2} \right) \otimes A(s) P_{\Bar{b}}^\perp + \left(\frac{X-iY}{2} \right) \otimes P_{\Bar{b}}^\perp A(s)\\
        &= \begin{bmatrix}
            0 & A(s) P_{\bar{b}}^\perp\\
            P_{\bar{b}}^\perp A(s) &0
        \end{bmatrix} \\
             &= H_1(s) + H_2(s) + H_3(s) + H_4(s),
    \end{aligned}
\end{equation}
where
\begin{equation*}
    \begin{aligned}
        H_1(s) &= (1-s) X\otimes Z \otimes ( \otimes_{l=1}^n I)\\
        H_2(s) &= sX \otimes X\otimes A = s\sum_{i=1}^m  X \otimes X\otimes (\otimes_{l=1}^n A_{il})\\
        H_3(s) &= -\frac{1}{2}(1-s)\begin{bmatrix}
        0 & (Z+iY)\otimes bb^\dag\\
        (Z+iY)^\dag \otimes bb^\dag &0
        \end{bmatrix}\\
        &=\frac{1}{2}(1-s) \sum_{j_1, j_2}  \left( \begin{bmatrix}
            0 & Z\otimes b_{j_1 \cdot} b_{j_2 \cdot}^\dag\\
            Z\otimes b_{j_2 \cdot} b_{j_1 \cdot}^\dag & 0
        \end{bmatrix} + \begin{bmatrix}
            0 & iY\otimes b_{j_1 \cdot} b_{j_2 \cdot}^\dag\\
            (iY)^\dag\otimes b_{j_2 \cdot} b_{j_1 \cdot}^\dag & 0
        \end{bmatrix}\right)\\
        H_4(s) &= -\frac{1}{4} s (X + iY) \otimes (I + X) \otimes Abb^\dag -\frac{1}{4} s (X - iY) \otimes (I + X) \otimes bb^\dag A\\
        &= \frac{1}{2}s\sum_{i j_1 j_2}  \left(\begin{bmatrix}
            0& X\otimes \left(A_{i\cdot} b_{j_1 \cdot} b_{j_2 \cdot}^\dag \right)\\
            X\otimes \left( b_{j_2 \cdot} b_{j_1 \cdot}^\dag A_{i\cdot} \right) &0
        \end{bmatrix} + \begin{bmatrix}
            0& I\otimes \left(A_{i\cdot} b_{j_1 \cdot} b_{j_2 \cdot}^\dag \right)\\
            I\otimes \left( b_{j_2 \cdot} b_{j_1 \cdot}^\dag A_{i\cdot} \right) &0
        \end{bmatrix}\right).
    \end{aligned}
\end{equation*}
The summands in the four Hamiltonians can be categorized into the two types of Hamiltonian defined in the following.
Let us define the two types of Hamiltonian
\begin{equation*}
    \begin{aligned}
        \text{Type-1: } H^1 &= P_1\otimes P_2 \otimes C_1 \otimes \cdots \otimes C_n\\
        \text{Type-2: }H^2 &= \begin{bmatrix}
            0 & P_0 \otimes D_1 \otimes \cdots \otimes D_n\\
            P_0^\dag \otimes D_1^\dag \otimes \cdots \otimes D_n^\dag &0
        \end{bmatrix},
    \end{aligned}
\end{equation*}
where $P_j \in \{I, X, Y, Z\}$, $C_i$ and $D_i$ are both $\mathbb{C}^{2\times2}$, and $C_i$ is Hermitian.
It is obvious that the summands of $H_1(s)$ and $H_2(s)$ are Type 1 Hamiltonians; the summands of $H_3(s)$ and $H_4(s)$ are Type 2 Hamiltonians. 
One can also verify that the total amount of such summands in $H(s)$ is a polynomial of $m$ and $d$.
% which is polylogarithmic of the problem size.
% 
\begin{lemma}\label{lemma: qlsa Hamiltonian decomposition count}
$H(s)$ can be represented as the summation of $(m+1)$ Type-1 Hamiltonians and $(2d^2 + 2md^2)$ Type-2 Hamiltonians.
\end{lemma}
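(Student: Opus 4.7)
The plan is to verify the claimed decomposition by direct inspection of the four summands $H_1(s), H_2(s), H_3(s), H_4(s)$ already displayed in \eqref{eq: Hamiltonian decomposition}, matching each term against the Type-1 and Type-2 templates and tallying them.

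First, I would read off that $H_1(s) = (1-s)\, X \otimes Z \otimes (\otimes_{l=1}^n I)$ is a single Type-1 Hamiltonian, with $P_1 = X$, $P_2 = Z$, and $C_l = I$ (the scalar $(1-s)$ absorbed into one factor). Next, $H_2(s) = s \sum_{i=1}^m X \otimes X \otimes (\otimes_{l=1}^n A_{il})$ is manifestly a sum of $m$ Type-1 Hamiltonians, one for each $i\in[m]$, with $P_1 = P_2 = X$ and $C_l = A_{il}$; Hermiticity of each $C_l$ follows from the standing hypothesis that every $A_{ik}$ is Hermitian. This accounts for the $m+1$ Type-1 summands.

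For the Type-2 count, I would examine $H_3(s)$ and $H_4(s)$ block by block. In $H_3(s)$, each pair $(j_1,j_2)\in [d]\times[d]$ contributes exactly two $2\times 2$ block matrices, one carrying $P_0 = Z$ (so that $P_0^\dag = Z$) and one carrying $P_0 = iY$ (with the imaginary unit absorbed into $D_1$ so that $P_0\in\{I,X,Y,Z\}$). In both cases the lower-left block is the Hermitian conjugate of the upper-right block, because interchanging $j_1\leftrightarrow j_2$ sends $b_{j_1\cdot}b_{j_2\cdot}^\dag$ to its adjoint $b_{j_2\cdot}b_{j_1\cdot}^\dag$, matching the Type-2 template factor-wise. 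This yields $2d^2$ Type-2 summands. The same argument applied to $H_4(s)$, whose index set is $[m]\times[d]\times[d]$ and which splits into a $P_0=X$ piece and a $P_0=I$ piece, contributes $2md^2$ additional Type-2 summands.

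The main subtlety, rather than an obstacle, is confirming that the off-diagonal blocks really are adjoints of one another and that each Pauli-like prefactor can be placed in the form $P\in\{I,X,Y,Z\}$ after absorbing phases into the $D_k$'s. Both facts are direct consequences of the block structure in \eqref{eq: Hamiltonian decomposition} together with the factorized identity $bb^\dag = \sum_{j_1,j_2} \otimes_{k=1}^n b_{j_1k} b_{j_2k}^\dag$. Adding the two counts $m+1$ and $2d^2 + 2md^2$ then gives the claim.
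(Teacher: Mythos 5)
Your proposal is correct and follows essentially the same route as the paper: the paper's own proof is a one-line tally of the four summands $H_1,\dots,H_4$ against the Type-1 and Type-2 templates, and you carry out exactly that tally, simply with the (welcome) extra care of checking that the off-diagonal blocks in $H_3, H_4$ are genuine adjoint pairs and that the $i$ in $iY$ can be absorbed into a $D_k$ to match the template with $P_0\in\{I,X,Y,Z\}$.
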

\begin{proof}
We see that $H_1(s)$ is Type-1; $H_2(s)$ is summation of $m$ Type-1 Hamiltonians; $H_3(s)$ is summation of $2d^2$ Type-2 Hamiltonians; and $H_4(s)$ is summation of $2md^2$ Type-2 Hamiltonians.
\end{proof}
\begin{remark}
The fact that $H(s)$ can be decomposed into $\mathcal{O}(md^2)$ Type-1 and Type-2 Hamiltonians allows us to apply Trotterization Methods to implement Algorithm~\ref{algo: qlsa} because it will only contribute polylogarithmic overhead to complexities.
\end{remark}

We apply the Trotterization method to the evolution of Hamiltonian $H(s)$ and approximate it by the product of time evolution of all the summands, which are Type 1 and Type 2 Hamiltonians. In the next section,  we introduce our circuit design for the two types of circuits and discuss their properties. In Section \ref{sec: circuit analysis}, we analyze the circuit approximation accuracy needed for the Troterization Method and the total circuit depth of the whole implementation.

\subsection{Circuit for Type-1 Hamiltonian}
A Type-1 Hamiltonian is a tensor product of Pauli and Hermitian matrices. $H_1(s)$ is a special case of Type-1 Hamiltonian since it is a scaled Pauli matrix. For Pauli matrices, their time evolution is discussed in \cite{nielsen2010quantum} Section 4.7.3. The idea is to apply single qubit operations on each qubit to turn any $X$ or $Y$ to $Z$; then apply the matrix exponential for the new Hamiltonian with only $I$ and $Z$; finally undo the operations to turn certain $Z$ back to $X$ or $Y$. We refer the readers to \cite{nielsen2010quantum} for details about this case.

For general Type-1 Hamiltonians, the following lemma is helpful for our circuit design.
\begin{lemma}\label{lemma: unitary sandwich evolution}
	For any $t\geq 0$ and any Hermitian matrix $M$ with decomposition $M = U_P \Sigma_M U_P^\dag$, where $U_P$ is a unitary matrix, the following identity holds
	\begin{equation*}
		e^{-itM} = U_P e^{-it \Sigma_M} U_P^\dag.
	\end{equation*}
\end{lemma}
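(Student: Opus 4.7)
The plan is to prove this by reducing everything to the power-series definition of the matrix exponential. Specifically, I would start from
\begin{equation*}
e^{-itM} = \sum_{k=0}^{\infty} \frac{(-it)^k}{k!} M^k,
\end{equation*}
which is well-defined because $M$ is Hermitian (in fact, boundedness is all that is needed).

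Next, I would use the key algebraic fact that unitary conjugation commutes with powers: since $U_P^\dag U_P = I$, we have $M^k = (U_P \Sigma_M U_P^\dag)^k = U_P \Sigma_M^k U_P^\dag$ for every nonnegative integer $k$ (straightforward induction on $k$, with the telescoping of the internal $U_P^\dag U_P$ factors). Substituting this into the power series and using linearity to pull $U_P$ out on the left and $U_P^\dag$ out on the right gives
\begin{equation*}
e^{-itM} = U_P \left( \sum_{k=0}^{\infty} \frac{(-it)^k}{k!} \Sigma_M^k \right) U_P^\dag = U_P \, e^{-it\Sigma_M} \, U_P^\dag,
\end{equation*}
which is exactly the claimed identity.

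There is essentially no obstacle here; the only mild subtlety is justifying the interchange of the unitary conjugation with the infinite sum, which is immediate from continuity of the map $A \mapsto U_P A U_P^\dag$ (it is an isometry in the spectral norm), together with the absolute convergence of the exponential series. I would state this justification in one line and then conclude. Note that the lemma does not actually require $\Sigma_M$ to be diagonal or $U_P$ to be the eigenvector matrix of $M$; it holds for any unitary similarity, which is exactly the form in which it will be invoked when implementing the Type-1 Hamiltonian simulation in the sequel.
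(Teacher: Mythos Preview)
Your proposal is correct and follows essentially the same route as the paper: the paper's proof also expands $e^{-itM}$ as its Taylor series, replaces each $M^k$ by $U_P \Sigma_M^k U_P^\dag$, and re-sums to obtain $U_P e^{-it\Sigma_M} U_P^\dag$. Your version is slightly more careful about the convergence/interchange justification and about noting that $\Sigma_M$ need not be diagonal, but the core argument is identical.
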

\begin{proof}
Using Taylor expansion, it is obvious that
\begin{align*}
	e^{-itM} &= I + (-it)M + \frac{(-it)^2}{2!}M^2 + \cdots \\
	&=  U_P\left(I + (-it)\Sigma_M + \frac{(-it)^2}{2!}\Sigma_M^2 + \cdots  \right) U_P^\dag \\
	&= U_P e^{-it \Sigma_M} U_P^\dag.
\end{align*}
\end{proof}

\begin{remark}
For Type-1 Hamiltonian $H^1$, it can be decomposed as
\begin{equation*}
\begin{aligned}
	H^1 &=  U_{H^1} \Sigma_{H^1} U_{H^1}^\dag
\end{aligned}
\end{equation*}
with
\begin{equation*}
\begin{aligned}
	U_{H^1} &= U_{P_1} \otimes U_{P_2} \otimes U_{C_1} \otimes \cdots \otimes U_{C_n}\\
	\Sigma_{H^1} &= \Sigma_{P_1}\otimes \Sigma_{P_2} \otimes \Sigma_{C_1} \otimes \cdots \otimes \Sigma_{C_n},
\end{aligned}
\end{equation*}
where $U_{P_i} \Sigma_{P_i} U_{P_i}^\dag$ is the eigenvalue decomposition of $P_i$ and $U_{C_j}\Sigma_{C_j}U_{C_j}^\dag$ is the singular value decomposition of $C_j$.
We choose the order of the singular values such that $\Sigma_{H^1}$ can be represented as
\begin{equation} \label{eq: type1 sv}
 \begin{aligned}
 	\Sigma_{H^1} &= \left\| \Sigma_{H^1} \right\|_2 \Sigma_{P_1}\otimes \Sigma_{P_2} \otimes\begin{bmatrix}
 	1&0\\0& \sigma_{H^1,1}
 	\end{bmatrix}\otimes \cdots 
 	\otimes \begin{bmatrix}
 	1&0\\0& \sigma_{H^1,n}
 	\end{bmatrix},
 \end{aligned}
\end{equation}
where $\sigma_{H^1, i}\in [0, 1]$ for all $i\in \{1, \dots, n\}.$
Then the time evolution can be written as
\begin{equation*}
e^{-itH^1} = U_{H^1} e^{-it\Sigma_{H^1}} U_{H^1}^\dag.
\end{equation*}
\end{remark}

For $U_{H^1}$, each matrix in the tensor product is a unitary matrix in $\mathbb{C}^{2\times2}$, thus we can implement the circuit for them efficiently. 
For $e^{-it\Sigma_{H^1}}$, we introduce an algorithm to implement the circuit. For the simplicity and the ease to read, we start with a generic version, see Algorithm \ref{algo: generic implement diagonal tensor}, and give explanations afterwards.
%
%
%Note that, when describing an quantum algorithm, we only need to describe it according to specific basis quantum states. The output for a general quantum state is simply the superposition of the outputs for certain basis quantum states by the nature of quantum mechanics.
%
%
\begin{algorithm}[H] 
\caption{Generic Implementation of $e^{-it\Sigma}$} 
\label{algo: generic implement diagonal tensor}
\begin{algorithmic}[1]
\State Given $t\geq 0$ and $\Sigma$ in the format of Eq.~\eqref{eq: type1 sv}
\State Represent diagonal element $\sigma_{i}$ in binary format,
\State Encode the binary representation of $\sigma_{i}$ as quantum state $\ket{\sigma_{i}}$,
\State Apply quantum multiplier to $\ket{\sigma_{i}}$ controlled by the input quantum states,
\State Apply a series of Phase Gate controlled by the result of quantum multiplier.
%
%\State \Return{$(x^k,y^k,s^k)$}
\end{algorithmic}
\end{algorithm}
Step 2-4 in Algorithm \ref{algo: generic implement diagonal tensor} are inspired by \cite{ruiz2017quantum}. We encourage the readers to read \cite{ruiz2017quantum} to learn their quantum adder and quantum multiplier. Their quantum adder and quantum multiplier are designed for integer numbers.
In the followings, we give a thorough explanation of each step. For Step 4,  we start with a brief introduction of the quantum adder and multiplier. Then, we show that the quantum adder and multiplier also work for rational numbers in the specific form described in Lemma~\ref{lemma: adder} and explain how they can be used in the implementation. 

\subsubsection{Algorithm \ref{algo: generic implement diagonal tensor} Step 2 \& 3}
By the definition of $\sigma_i$ in Eq.~\eqref{eq: type1 sv}, we have $\sigma_i\in[0,1]$. Here we temporarily assume $\sigma_i$ can be represented using $p$ binary digits
\begin{equation*}
\begin{aligned}
\sigma_i = \sum_{j=1}^p \nu_j^i 2^{-j},
\end{aligned}
\end{equation*}
where $\nu_j^i$ are binary numbers.
This assumption is relaxed in Section~\ref{sec: approximate data}, where inexact binary representation is discussed.
Then, we use $\ket{\nu_j^i} = \ket{0}$ to represent $\nu_j^i = 0$ and $\ket{\nu_j^i} = \ket{1}$ to represent $\nu_j^i = 1$, which gives us a quantum representation of $\sigma_i$ as
\begin{equation*}
\ket{\sigma_i} = \ket{\nu_j^1}\ket{\nu_j^2}\cdots\ket{\nu_j^p}.
\end{equation*}
Such binary presentation of numbers is used to construct quantum adder and multiplier in \cite{ruiz2017quantum}.

\subsubsection{Algorithm \ref{algo: generic implement diagonal tensor} Step 4 \& 5}
In \cite{ruiz2017quantum}, the authors introduce a non-modular quantum Fourier transform (QFT) based quantum adder following the QFT adder introduced in \cite{draper2000addition}. Then they introduce a quantum multiplier based on their quantum adder.
The quantum adder and the quantum multiplier are designed for positive integers originally and easily accommodate signed integers by adding an extra qubit for the sign.
Staring from Lemma~\ref{lemma: adder}, we show that the quantum adder and the quantum multiplier also work for rational numbers between $-1$ and $1$ with mild modification of the original version. 
With the QFT-based quantum adder and quantum multiplier, we are ready to explain Step 4 of Algorithm~\ref{algo: generic implement diagonal tensor}.

Notice that $\Sigma$ is a diagonal matrix, it is known that $e^{-it\Sigma}$ is also a diagonal matrix with the $j$th diagonal entry being $e^{-it\Sigma_j}$, where $\Sigma_j$ is the $j$th entry of $\Sigma$.
When we apply the circuit $e^{-it\Sigma}$ on quantum state $\ket{j}$, we expect $e^{-it\Sigma_j} \ket{j}$ being the resulting state, i.e., $e^{-it\Sigma} \ket{j} = e^{-it\Sigma_j} \ket{j}$. Once we have a binary representation of $\Sigma_j$ encoded as a quantum state, we apply a gate phase circuit to turn $\ket{j}$ into $e^{-it\Sigma_j} \ket{j}$.
Specifically, for $\ket{\Sigma_j} = \ket{\nu_j^1}\cdots\ket{\nu_j^p}$, if  $\nu_j^k=1$, then we apply the phase shift $e^{-it 2^{-k}}$ to $\ket{j}$. This is the controlled Phase gate in Step 5.
What is left is the computation of the binary representation of $\Sigma_j$. Notice that $\Sigma$ is a tensor product of $(n+2)$ $2$-by-$2$  diagonal matrices with all the diagonal entries between $-1$ and $1$, we can get $\Sigma_j$ by multiplying $n+2$ numbers. 
Specifically, for $\ket{j} = \ket{j_1}\ket{j_2}\cdots\ket{j_{n+2}}$ with $j_k$ being binary numbers, if  $j_k = 0$, then we take the first diagonal entry of the $k$th $2$-by-$2$ diagonal matrix into the multiplication; otherwise, we take the second diagonal entry of the matrix. This is the controlled quantum multiplier in Step 4. The quantum circuit we describe maps $\ket{j}$ to $e^{-it\Sigma_j} \ket{j}$ and thus implements the circuit $e^{-it\Sigma}$ because of the linearity of quantum circuit.

In the rest of this section, we prove our aforementioned claim that the quantum adder and quantum multiplier can be generalized to certain non-integer numbers in Lemma~\ref{lemma: adder} to \ref{lemma: multiplier gate}. Finally, we summarize our circuit for Type-1 Hamiltonian afterwards in Lemma~\ref{lemma: Type 1 exact}.
\begin{lemma}\label{lemma: adder}
Let $K$ be an integer. Let $\nu^1 = \nu^1_1 2^{K-1} + \cdots + \nu^1_p 2^{K-p}$ and $\nu^2 = \nu^2_1 2^{K-1} + \cdots + \nu^2_p 2^{K-p}$,
where $\nu^1_i$ and $\nu^2_j$ are binary numbers for all $i\in[p]$ and $j\in[p]$.
Let $\ket{\nu^1} = \ket{\nu^1_1}\otimes\cdots\otimes\ket{\nu^1_p}$ and $\ket{\nu^2} = \ket{\nu^2_1}\otimes\cdots\otimes\ket{\nu^2_p}$. If $C$ is a circuit such that, when $K=p$,
\begin{equation}\label{eq: adder circuit}
C \ket{0}\otimes\ket{\nu^1}\otimes\ket{\nu^2} = \ket{\nu^3}\otimes\ket{\nu^2},
\end{equation}
where $\ket{\nu^3} = \ket{\nu^3_1}\otimes \cdots\otimes\ket{\nu^3_{p+1}}$, $\nu^3_i\in \{0,1\}$ for all $i\in[p+1]$, and 
\begin{equation}\label{eq: adder}
\nu^1+\nu^2 = \nu^3_1 2^{K} + \cdots + \nu^3_{p+1} 2^{K-p},
\end{equation}
then, Eq.~\eqref{eq: adder circuit} holds for any integer $K$.
\end{lemma}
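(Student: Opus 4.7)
The plan is to observe that the quantum states $\ket{\nu^1}$ and $\ket{\nu^2}$ are defined purely by the bit patterns $\nu^1_i,\nu^2_j\in\{0,1\}$ and carry no information about $K$. The circuit $C$ acts on these qubits in a way that is completely oblivious to how we choose to interpret the underlying bit strings as numbers. Consequently, whatever bit string $\nu^3_1\cdots\nu^3_{p+1}$ the circuit outputs on input $\ket{0}\otimes\ket{\nu^1}\otimes\ket{\nu^2}$ is the same for all $K$; the lemma is therefore a pure scaling statement, not a statement about circuit design, and the only work is to push the $K$-dependent scale factor through the addition identity.

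Concretely, the first step is to isolate the $K=p$ case by defining the integer quantities $\tilde\nu^i = \nu^i_1\,2^{p-1}+\cdots+\nu^i_p\,2^0$ for $i=1,2$, and $\tilde\nu^3 = \nu^3_1\,2^p+\cdots+\nu^3_{p+1}\,2^0$. By the hypothesis of the lemma applied at $K=p$, the circuit $C$ takes $\ket{0}\otimes\ket{\nu^1}\otimes\ket{\nu^2}$ to $\ket{\nu^3}\otimes\ket{\nu^2}$, where the bit pattern $\nu^3$ satisfies $\tilde\nu^1+\tilde\nu^2=\tilde\nu^3$.

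Next, for arbitrary integer $K$, I would simply rewrite $\nu^i$ in terms of $\tilde\nu^i$: directly from the definitions one has $\nu^i = 2^{K-p}\,\tilde\nu^i$ for $i=1,2$ and also $\nu^3_1\,2^K+\cdots+\nu^3_{p+1}\,2^{K-p} = 2^{K-p}\,\tilde\nu^3$. Multiplying the integer identity $\tilde\nu^1+\tilde\nu^2=\tilde\nu^3$ by $2^{K-p}$ then gives $\nu^1+\nu^2 = \nu^3_1\,2^K+\cdots+\nu^3_{p+1}\,2^{K-p}$, which is exactly Eq.~\eqref{eq: adder}. Since the action of $C$ on the qubits is identical regardless of $K$, the output state is still $\ket{\nu^3}\otimes\ket{\nu^2}$, so Eq.~\eqref{eq: adder circuit} holds as well.

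There is no substantive obstacle; the only thing to be careful about is keeping the roles of the bit string $\nu^i_j$ (an element of $\{0,1\}$, hence $K$-independent) and the numerical value $\nu^i$ (which depends on $K$) cleanly separated, and noting that the claim places no constraint on the sign of $K$, so the argument covers $K\le 0$ as well via the same multiplication by $2^{K-p}$.
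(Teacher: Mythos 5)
Your proposal is correct and takes essentially the same approach as the paper: both observe that the circuit's action is determined only by the bit strings, and that the addition identity at $K=p$ is scale-invariant, so multiplying by a power of $2$ (you use $2^{K-p}$, the paper uses $2^K$ on the normalized sums) yields the general-$K$ identity. Your version is slightly more explicit about why the output bit pattern is $K$-independent, but the argument is the same.
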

\proof{Proof.}
When $K=p$, the circuit $C$ is the quantum adder introduced in \cite{ruiz2017quantum}, which applies to integer numbers. Since Eq.~\eqref{eq: adder} holds for $K=p$, we have
\begin{equation*}
2^K\left( \sum_{i=1}^p \nu^1_i 2^{-i} + \sum_{i=1}^p \nu^2_i 2^{-i} - \sum_{i=1}^{p+1} \nu^3_i 2^{1-i} \right) =0,
\end{equation*}
which holds for any integer $K$.
So the circuit $C$ is an adder for any number pair $\nu^1$ and $\nu^2$.
\endproof
We have shown the quantum adder works for positive rational numbers in the specific form described in Lemma~\ref{lemma: adder}. In fact, it also works for corresponding signed rational numbers after mild modification following the argument in \cite{ruiz2017quantum}. This modification only adds one extra qubit for the sign and does not change the asymptotic number of the gates needed to implement the adder. In this work, we only care about the addition of numbers between $-1$ and $1$, which is the case when $K=0$.
For the adder to accomplish this job, we summarize its properties in the following lemma. Readers are referred  to Section 4 in \cite{ruiz2017quantum} for the proof.
\begin{lemma}\label{lemma: adder gate}
Let $p$ be an integer. Let $\nu^1$ and $\nu^2$ be signed numbers with $|\nu^1| = \nu^1_1 2^{-1} + \cdots + \nu^1_p 2^{-p}$ and $|\nu^2| = \nu^2_1 2^{-1} + \cdots + \nu^2_p 2^{-p}$,
where $\nu^1_i$ and $\nu^2_j$ are binary numbers for all $i\in[p]$ and $j\in[p]$.
The quantum adder introduced in \cite{ruiz2017quantum} computes $\nu^1+\nu^2$ using $\mathcal{O}(p^2)$ single qubit gates and their controlled versions \textcolor{black}{with gate depth $\mathcal{O}(p^2)$}. Specifically, the adder uses one $(p+2)$-qubit QFT, one $(p+2)$-qubit IQFT, and $(p+1)(p+2)/2$ controlled single-qubit gates.
\end{lemma}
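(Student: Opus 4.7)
The plan is to reduce the claim to the integer-arithmetic version of the QFT adder established in \cite{ruiz2017quantum} and then account for (i) the fractional binary format, (ii) the sign bit, and (iii) the resource count. By Lemma~\ref{lemma: adder}, any circuit that correctly adds two nonnegative $p$-bit integers in the standard basis also correctly adds their scaled fractional counterparts $|\nu^1|,|\nu^2|\in[0,1)$ having the binary expansions in the hypothesis, because the global rescaling by $2^{-p}$ commutes with the bit-wise action of the adder. Hence, up to relabelling the qubit weights, it suffices to prove the gate-count claim for the unsigned integer adder on $p$ bits plus one overflow bit.

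Next I would recall the structure of the Draper/Ruiz adder: on input $\ket{0}\otimes\ket{\nu^1}\otimes\ket{\nu^2}$ (with $p+2$ qubits in the first register to accommodate the sign and a possible overflow), one applies a $(p+2)$-qubit QFT to the first register, then $(p+1)(p+2)/2$ controlled single-qubit phase rotations whose controls come from $\ket{\nu^2}$ and whose targets are the Fourier-transformed first register, and finally a $(p+2)$-qubit inverse QFT. The controlled phases implement the addition of $\nu^2$ in the Fourier basis; counting is immediate because for each of the $p+2$ Fourier qubits we attach controlled phases from a decreasing subset of the bits of $\nu^2$, giving the arithmetic sum $1+2+\cdots+(p+2) = (p+1)(p+2)/2$.

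To extend to \emph{signed} inputs as claimed, I would follow the bookkeeping outlined in \cite[Section 4]{ruiz2017quantum}: adjoin a single sign qubit to each summand and a constant-size classical control pattern that dispatches between the "same sign" (ordinary addition of magnitudes) and "opposite sign" (subtraction via two's-complement) branches. Each branch reuses the same QFT, phase, and IQFT primitives; the only overhead is $\mathcal{O}(p)$ controlled NOTs for the two's-complement flip, which is absorbed into the $\mathcal{O}(p^2)$ total.

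Finally, I would assemble the resource count. The $(p+2)$-qubit QFT and IQFT each use $\Theta(p^2)$ elementary single-qubit and controlled single-qubit gates and admit a canonical sequential schedule of depth $\Theta(p^2)$; the $(p+1)(p+2)/2$ controlled phase rotations in between likewise contribute $\Theta(p^2)$ gates and depth $\Theta(p^2)$. Summing the three stages yields $\mathcal{O}(p^2)$ gates and gate depth $\mathcal{O}(p^2)$, matching the statement. The main obstacle I anticipate is not any single step but the careful argument that Draper's integer QFT adder, when reinterpreted on the rescaled register dictated by Lemma~\ref{lemma: adder} and augmented with a sign qubit, still produces the correct signed fractional sum; once that invariance under rescaling is pinned down, the gate count is a direct tally of the three sub-circuits.
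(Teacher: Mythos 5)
The paper does not actually prove this lemma: it simply defers, writing ``Readers are referred to Section~4 in [Ruiz et~al.] for the proof,'' after a short narrative that (a) invokes Lemma~\ref{lemma: adder} to pass from integers to the scaled fractional format and (b) notes that a single extra sign qubit handles signed operands without changing the asymptotic gate count. Your reconstruction follows exactly this two-step narrative and then supplies the detail the paper leaves to the citation (QFT, controlled phase block, inverse QFT, gate tally), so it is the same approach fleshed out rather than a different route.

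One small arithmetic slip in your gate tally: $1+2+\cdots+(p+2) = (p+2)(p+3)/2$, not $(p+1)(p+2)/2$. To recover the count stated in the lemma, the sum must terminate at $p+1$, i.e.\ $1+2+\cdots+(p+1) = (p+1)(p+2)/2$; equivalently, one of the $p+2$ Fourier-basis target qubits (the most significant / sign position) receives no controlled rotations from the $p$-bit second operand. This bookkeeping error does not affect the $\mathcal{O}(p^2)$ asymptotic claim, and the rest of your resource accounting (QFT and IQFT each $\Theta(p^2)$ gates and depth, $\mathcal{O}(p)$ CNOT overhead for the two's-complement branch) is consistent with both the lemma statement and the cited construction.
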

Similar argument can be made for the quantum multiplier. We summarize them in the following two lemmas.
\begin{lemma}\label{lemma: multiplier}
Let $K$ be an integer. Let $\nu^1 = \nu^1_1 2^{K-1} + \cdots + \nu^1_p 2^{K-p}$ and $\nu^2 = \nu^2_1 2^{K-1} + \cdots + \nu^2_p 2^{K-p}$,
where $\nu^1_i$ and $\nu^2_j$ are binary numbers for all $i\in[p]$ and $j\in[p]$.
Let $\ket{\nu^1} = \ket{\nu^1_1}\otimes\cdots\otimes\ket{\nu^1_p}$ and $\ket{\nu^2} = \ket{\nu^2_1}\otimes\cdots\otimes\ket{\nu^2_p}$. If $C$ is a circuit such that, when $K=p$,
\begin{equation}\label{eq: multiplier circuit}
C \ket{0}_{2p}\otimes\ket{\nu^1}\otimes\ket{\nu^2} = \ket{\nu^3}\otimes\ket{\nu^1}\otimes\ket{\nu^2},
\end{equation}
where $\ket{\nu^3} = \ket{\nu^3_1}\otimes \cdots\otimes\ket{\nu^3_{2p}}$, $\nu^3_i\in \{0,1\}$ for all $i\in [2p]$, and 
\begin{equation}\label{eq: multiplier}
\nu^1 \nu^2 = \nu^3_1 2^{2K-1} + \cdots + \nu^3_{2p} 2^{2K-2p},
\end{equation}
then, Eq.~\eqref{eq: multiplier circuit} holds for any integer $K$.
\end{lemma}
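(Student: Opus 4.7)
\medskip

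The plan is to mirror the proof of Lemma~\ref{lemma: adder} essentially verbatim, exploiting the fact that the quantum circuit $C$ acts only on bit patterns stored in qubits and is blind to how those bit patterns are scaled into numerical values. The key observation is that the identity in Eq.~\eqref{eq: multiplier} is scale-invariant once one divides out by the appropriate power of two.

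Concretely, I would begin from the $K=p$ case, where $C$ is precisely the integer quantum multiplier of \cite{ruiz2017quantum}, and hence Eq.~\eqref{eq: multiplier circuit} and Eq.~\eqref{eq: multiplier} hold by construction. Substituting $\nu^1 = 2^K \sum_{i=1}^p \nu^1_i 2^{-i}$ and $\nu^2 = 2^K \sum_{j=1}^p \nu^2_j 2^{-j}$ into Eq.~\eqref{eq: multiplier} and factoring out the powers of two yields
\begin{equation*}
2^{2K}\left[\left(\sum_{i=1}^p \nu^1_i 2^{-i}\right)\left(\sum_{j=1}^p \nu^2_j 2^{-j}\right) - \sum_{k=1}^{2p} \nu^3_k 2^{-k}\right] = 0.
\end{equation*}
Since $2^{2K}\neq 0$, the bracketed quantity vanishes, and this is a statement purely about the bit patterns $\{\nu^1_i\}$, $\{\nu^2_j\}$, $\{\nu^3_k\}$ with no dependence on $K$. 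Thus the bit pattern $\{\nu^3_k\}$ that the circuit $C$ must produce to make Eq.~\eqref{eq: multiplier circuit} hold is the same function of the input bit patterns regardless of $K$.

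To conclude, I would observe that $C$ is a fixed sequence of quantum gates acting on the $4p$ qubits; its action on computational-basis states is a deterministic map between bit strings, independent of any numerical interpretation we attach to those strings. Since for $K=p$ the circuit $C$ produces the bit string $\{\nu^3_k\}$ satisfying the scale-invariant relation above, it also produces that same bit string for any other integer $K$, with $\nu^1$ and $\nu^2$ then interpreted as numbers scaled by $2^{K-p}$. Multiplying the derived bit-pattern identity by $2^{2K}$ recovers Eq.~\eqref{eq: multiplier} for the new value of $K$, proving Eq.~\eqref{eq: multiplier circuit} in general.

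There is no real obstacle here beyond cleanly stating the scale-invariance argument; the proof is essentially a bookkeeping exercise identical in spirit to that of Lemma~\ref{lemma: adder}. The only mild subtlety, as in the adder case, is that the original construction of \cite{ruiz2017quantum} is stated for positive integers, so one should note that the extension to signed rationals in $[-1,1]$ (which corresponds to $K=0$ in our application) is handled by appending a single sign qubit as in that reference, contributing only an $\mathcal{O}(1)$ overhead that will be absorbed into the subsequent gate-count lemma.
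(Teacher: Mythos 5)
Your proposal is correct and follows essentially the same route as the paper's proof: both factor out $2^{2K}$ from Eq.~\eqref{eq: multiplier} to expose a $K$-independent identity on the bit patterns, then invoke the fact that the circuit's action on computational-basis states depends only on those bit patterns. Your added remarks about the circuit being a deterministic bit-string map and about the sign-qubit extension are sensible clarifications but do not change the argument.
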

\proof{Proof.}
When $K=p$, the circuit $C$ is the original quantum multiplier introduced in \cite{ruiz2017quantum}, which applies to integer numbers. Since Eq.~\eqref{eq: multiplier} holds for $K=p$, we have
\begin{equation*}
2^{2K}\left( \sum_{i=1}^p \nu^1_i 2^{-i} \sum_{i=1}^p \nu_i^2 2^{-i} - \sum_{i=1}^{2p} \nu_i^3 2^{-i}  \right)=0,
\end{equation*}
which holds for any integer $K$. So the circuit $C$ is an multiplier for any number pair $\nu^1$ and $\nu^2$.
\endproof
In \cite{ruiz2017quantum}, the authors also discuss how to multiply signed integer numbers by adding qubits for signs. We refer the readers to Section 7 of \cite{ruiz2017quantum} and summarize the main result in the following lemma.
\begin{lemma}\label{lemma: multiplier gate}
Let $\nu^1$ and $\nu^2$ be two signed numbers with $|\nu^1| = \nu^1_1 2^{-1} + \cdots + \nu^1_p 2^{-p}$ and $|\nu^2| = \nu^2_1 2^{-1} + \cdots + \nu^2_p 2^{-p}$,
where $\nu^1_i$ and $\nu^2_j$ are binary numbers for all $i\in[p]$ and $j\in[p]$.
The quantum multiplier introduced in \cite{ruiz2017quantum} computes $\nu^1\nu^2$ using $\mathcal{O}(p^3)$ single-qubit gates and their controlled version \textcolor{black}{with gate depth $\mathcal{O}(p^3)$}. Specifically, the multiplier uses one $(2p+2)$-qubit QFT, one $(2p+2)$-qubit IQFT, and $p$ controlled adders.
\end{lemma}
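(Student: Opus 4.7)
The plan is to reduce the multiplier to $p$ invocations of the controlled adder from Lemma~\ref{lemma: adder gate}, following the construction of \cite{ruiz2017quantum}. By Lemma~\ref{lemma: multiplier}, it is enough to analyze the integer case $K=p$, since the statement for rationals in $[-1,1]$ is obtained from the integer circuit without any modification of gate structure. Handling signs is then the standard trick of appending one qubit for the sign of each factor and computing the sign of the product as their XOR; this adds only $\mathcal{O}(1)$ gates and does not affect the asymptotic count.

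First I would write schoolbook multiplication as the telescoping sum
$$\nu^1 \nu^2 \;=\; \sum_{i=1}^{p} \nu^1_i \,\bigl( 2^{-i}\nu^2 \bigr),$$
so that the product accumulates in a $(2p+2)$-qubit result register via $p$ conditional additions, the $i$-th one controlled by the bit $\nu^1_i$ and shifted so that $2^{-i}\nu^2$ aligns with the appropriate qubits of the register. Then I would invoke the Fourier-basis addition trick of \cite{draper2000addition,ruiz2017quantum}: apply one $(2p+2)$-qubit QFT to the result register at the outset, perform all $p$ conditional additions as controlled phase rotations on the Fourier-transformed register, and apply one $(2p+2)$-qubit IQFT at the end. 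This is exactly the structure asserted in the lemma (one QFT, one IQFT, and $p$ controlled adders).

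The gate count then follows by composition. By Lemma~\ref{lemma: adder gate}, each controlled adder in the Fourier basis uses $\mathcal{O}(p^2)$ controlled single-qubit gates arranged with depth $\mathcal{O}(p^2)$, since the conditioning on $\nu^1_i$ only promotes each of the $(p+1)(p+2)/2$ single-qubit phase gates to a doubly-controlled phase gate, which decomposes into $\mathcal{O}(1)$ single-qubit gates and CNOTs. Summing over the $p$ adders gives $\mathcal{O}(p^3)$ gates and depth $\mathcal{O}(p^3)$, and the two boundary QFTs contribute only $\mathcal{O}(p^2)$ additional gates of depth $\mathcal{O}(p)$, which are absorbed into the bound.

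The main obstacle is bookkeeping rather than any substantive estimate: I need to verify that the $(2p+2)$-qubit result register is wide enough to hold the cumulative sum of the $p$ shifted summands without overflow (the two extra qubits absorb the carries from summing $p$ terms each bounded by $|\nu^2|\le 1$), and that executing the controlled phase rotations between a single QFT/IQFT pair is equivalent to the sequential composition of $p$ controlled adders in the computational basis. Once these facts are in place, the gate counts above yield the claimed $\mathcal{O}(p^3)$ bound on both size and depth, and the extension to signed rational inputs via Section~7 of \cite{ruiz2017quantum} is immediate.
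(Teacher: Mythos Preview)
Your proposal is correct and in fact goes well beyond what the paper does: the paper does not prove this lemma at all but simply states it as a summary of the construction in \cite{ruiz2017quantum}, referring the reader to Section~7 there for details. Your sketch---reducing to the integer case via Lemma~\ref{lemma: multiplier}, realizing schoolbook multiplication as $p$ controlled Fourier-basis additions sandwiched between a single QFT/IQFT pair on the $(2p+2)$-qubit accumulator, and then invoking the $\mathcal{O}(p^2)$ adder cost from Lemma~\ref{lemma: adder gate}---is exactly the construction the lemma is summarizing, and your gate and depth accounting is sound.
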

For both the quantum adder and quantum multiplier, when their input numbers are represented by $p$-qubit quantum states, we call the adder and multiplier by $p$-qubit adder and $p$-qubit multiplier, respectively.
The following lemma discusses the complexity of Algorithm~\ref{algo: generic implement diagonal tensor}.
\begin{lemma}\label{lemma: svd evolution}
If $\sigma_{H^1,i} = \sum_{j=1}^p \nu_j^i 2^{-j}$ with $\nu_j^i$'s being binary for all $i\in [n]$, there exists a quantum circuit that prepares $e^{-it\Sigma_{H^1}}$ as described in Algorithm~\ref{algo: generic implement diagonal tensor}, for any $t\geq 0$, using at most $\mathcal{O}(n)$ calls to the $p$-qubit quantum multiplier \textcolor{black}{with gate depth $\mathcal{O}(np^3)$}.
\end{lemma}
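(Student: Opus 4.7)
The plan is to unfold Algorithm~\ref{algo: generic implement diagonal tensor} into an explicit three-stage routine (load, multiply, phase-kick, uncompute) and count each stage against Lemmas~\ref{lemma: adder gate} and~\ref{lemma: multiplier gate}. First I would observe that $\Sigma_{H^1}$ is a tensor product of $n+2$ diagonal $2\times 2$ matrices whose entries lie in $[-1,1]$: the two Pauli factors contribute $\pm 1$ on the diagonal, and each $C_k$ contributes $1$ and $\sigma_{H^1,k}\in[0,1]$ by the ordering chosen in Eq.~\eqref{eq: type1 sv}. Hence for any computational basis state $\ket{j}=\ket{j_1}\cdots\ket{j_{n+2}}$, the eigenvalue $\Sigma_{H^1,j}$ is the product of $n+2$ numbers, each already presented in the $p$-bit signed format handled by Lemmas~\ref{lemma: adder} and~\ref{lemma: multiplier} (the Pauli contributions are just $\pm 1$, a trivial special case).

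Next I would describe the load stage: allocate $n+2$ registers of $p+1$ qubits each, and for every $k$ apply a controlled copy, conditioned on $\ket{j_k}$, of either the ``$+1$'' bitstring or the bitstring $\ket{\sigma_{H^1,k}}=\ket{\nu^k_1}\cdots\ket{\nu^k_p}$ (with an extra sign qubit for the Pauli factors). Each such load uses $\mathcal{O}(p)$ controlled single-qubit gates in depth $\mathcal{O}(1)$, so the whole loading stage costs $\mathcal{O}(np)$ gates in depth $\mathcal{O}(p)$ and is dominated by the multiplication stage. The multiplication stage iteratively applies the $p$-qubit multiplier of Lemma~\ref{lemma: multiplier gate} $n+1$ times, each call producing the running product into a fresh $2p$-qubit register; by the same lemma every call has gate depth $\mathcal{O}(p^3)$, giving total depth $\mathcal{O}(np^3)$ and exactly $\mathcal{O}(n)$ multiplier invocations. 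After the final multiplication, the low-order $p$ bits of the product register hold a binary expansion of $\Sigma_{H^1,j}$; I would argue correctness by noting that since all factors are bounded by $1$ in magnitude, no overflow occurs and Lemma~\ref{lemma: multiplier} guarantees the claimed binary representation for $K=0$.

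For the phase-kick stage I would apply, for each bit $\nu^j_k$ of the stored product, a controlled phase gate $e^{-it 2^{-k}}$ on an arbitrary qubit, so that the global phase accumulated on $\ket{j}$ is exactly $e^{-it\Sigma_{H^1,j}}$. This uses $\mathcal{O}(p)$ controlled single-qubit gates in depth $\mathcal{O}(p)$ and therefore does not affect the dominant count. Finally I would uncompute the loading and multiplication registers in reverse order; this doubles the gate count and depth but leaves both $\mathcal{O}(n)$ multiplier calls and $\mathcal{O}(np^3)$ depth intact. By linearity, the composite unitary acts as $\ket{j}\mapsto e^{-it\Sigma_{H^1,j}}\ket{j}$ on every basis state, i.e.\ it implements $e^{-it\Sigma_{H^1}}$.

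The main obstacle I anticipate is bookkeeping around signed arithmetic and the constant $\pm 1$ Pauli factors: I must ensure that the sign bits from the first two tensor factors are correctly threaded through the iterated multiplier, that the chosen ordering of diagonal entries is preserved through Eq.~\eqref{eq: type1 sv}, and that the uncomputation exactly undoes the forward multiplications so that no ancillary entanglement remains. Once these are verified, the gate and depth counts follow directly from Lemma~\ref{lemma: multiplier gate} applied $n+1=\mathcal{O}(n)$ times.
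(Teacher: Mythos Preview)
Your proposal is correct and follows the same approach as the paper: the core observation is that $\Sigma_{H^1}$ is a tensor product of $n+2$ diagonal $2\times 2$ blocks, so computing any diagonal entry requires multiplying at most $n+2$ $p$-bit numbers, hence $\mathcal{O}(n)$ multiplier calls, and the depth bound then follows from Lemma~\ref{lemma: multiplier gate}. The paper's own proof is essentially a one-line appeal to Step~4 of Algorithm~\ref{algo: generic implement diagonal tensor}; your load/multiply/phase-kick/uncompute breakdown and the discussion of sign bits and ancilla cleanup are all fine elaborations but go well beyond what the paper actually supplies.
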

\proof{Proof.}
Step 4 in Algorithm~\ref{algo: generic implement diagonal tensor} computes the multiplication of at most $n+2$ $p$-qubit numbers and thus needs at most $n+2$ calls to the $p$-qubit multiplier.
 \endproof
The following lemma summarizes the cost to perform time evolution of Type-1 Hamiltonian $H^1$.
\begin{lemma}\label{lemma: Type 1 exact}
For Type-1 Hamiltonian $H^1$ with decomposition
\begin{equation*}
\begin{aligned}
	H^1 &=  U_{H^1} \Sigma_{H^1} U_{H^1}^\dag
\end{aligned}
\end{equation*}
with
\begin{equation*}
\begin{aligned}
	U_{H^1} &= U_{P_1} \otimes U_{P_2} \otimes U_{C_1} \otimes \cdots \otimes U_{C_n}\\
	\Sigma_{H^1} &= \Sigma_{P_1}\otimes \Sigma_{P_2} \otimes \Sigma_{C_1} \otimes \cdots \otimes \Sigma_{C_n},
\end{aligned}
\end{equation*}
where $U_{P_i} \Sigma_{P_i} U_{P_i}^\dag$ is the eigenvalue decomposition of $P_i$ and $U_{C_j}\Sigma_{C_j}U_{C_j}^\dag$ is the singular value decomposition of $C_j$. We can rewrite $\Sigma_{H^1}$ as
\begin{equation*}
 \begin{aligned}
 	\Sigma_{H^1} &= \left\| H^1 \right\|_2 \Sigma_{P_1}\otimes \Sigma_{P_2} \otimes\begin{bmatrix}
 	1&0\\0& \sigma_{H^1,1}
 	\end{bmatrix}\otimes \cdots 
 	\otimes \begin{bmatrix}
 	1&0\\0& \sigma_{H^1,n}
 	\end{bmatrix},
 \end{aligned}
\end{equation*}
where $\sigma_{H^1, i}\in [0, 1]$ for all $i\in \{1, \dots, n\}.$
If $\sigma_{H^1,i} = \sum_{j=1}^p \nu_j^i 2^{-j}$  with $\nu_j^i$'s being binary for all $i\in [n]$, then there exists a quantum circuit that prepares $e^{-itH^1}$ for any $t\geq0$, using $\mathcal{O}(n)$ classical arithmetic operations, $\mathcal{O}(n)$ single qubit unitary circuits \textcolor{black}{with gate depth $\mathcal{O}(1)$}, and $\mathcal{O}(n)$ calls to the $p$-qubit quantum multiplier \textcolor{black}{with gate depth $\mathcal{O}(np^3)$}.  
\end{lemma}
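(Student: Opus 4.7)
The plan is to invoke Lemma~\ref{lemma: unitary sandwich evolution} to write $e^{-itH^1} = U_{H^1}\,e^{-it\Sigma_{H^1}}\,U_{H^1}^\dagger$, thereby splitting the task into (i) implementing the change-of-basis unitary $U_{H^1}$, (ii) implementing the diagonal evolution $e^{-it\Sigma_{H^1}}$, and (iii) implementing $U_{H^1}^\dagger$. Classically, I first compute the eigendecomposition of each Pauli factor $P_i$ and the singular value decomposition of each Hermitian $C_j$; since every factor is $2\times 2$, each decomposition costs $\mathcal{O}(1)$ classical arithmetic operations, for a total of $\mathcal{O}(n)$ operations over the $n+2$ factors. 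During this step I would also permute the singular values of each $C_j$ so that the largest sits in the $(1,1)$ slot, which allows $\|H^1\|_2$ to be factored out of $\Sigma_{H^1}$ and the remaining diagonal ratios $\sigma_{H^1,i}$ to lie in $[0,1]$, matching the form in the statement.

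For steps (i) and (iii), $U_{H^1}$ is by definition a tensor product of $n+2$ single-qubit unitaries, each acting on a distinct qubit. These can be compiled into $\mathcal{O}(1)$-depth single-qubit circuits and executed in parallel, so both $U_{H^1}$ and $U_{H^1}^\dagger$ together contribute $\mathcal{O}(n)$ single-qubit unitary circuits of depth $\mathcal{O}(1)$, matching the second cost bucket in the lemma. For step (ii), under the assumption that each $\sigma_{H^1,i}$ admits an exact $p$-bit binary expansion, I apply Lemma~\ref{lemma: svd evolution} verbatim: it constructs $e^{-it\Sigma_{H^1}}$ using $\mathcal{O}(n)$ calls to the $p$-qubit quantum multiplier with total gate depth $\mathcal{O}(np^3)$. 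The overall scalar $\|H^1\|_2$ can be absorbed into the evolution time $t$ before invoking that lemma, and the signs contributed by $\Sigma_{P_1},\Sigma_{P_2}$ are absorbed into the sign of the controlled-phase rotations at the final step of Algorithm~\ref{algo: generic implement diagonal tensor}, so no hypothesis of Lemma~\ref{lemma: svd evolution} is violated.

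Summing the three contributions yields exactly the complexity claimed in the statement: $\mathcal{O}(n)$ classical arithmetic operations for the single-qubit diagonalizations, $\mathcal{O}(n)$ single-qubit unitary circuits of depth $\mathcal{O}(1)$ for the conjugation by $U_{H^1}$, and $\mathcal{O}(n)$ calls to the $p$-qubit multiplier with combined depth $\mathcal{O}(np^3)$ for the diagonal evolution. The main point of care, rather than an obstacle, is the bookkeeping around the singular-value reordering and the absorption of the scalar $\|H^1\|_2$ into $t$ so that Lemma~\ref{lemma: svd evolution} applies cleanly; all the substantive technical work has already been carried out in Lemmas~\ref{lemma: unitary sandwich evolution} and~\ref{lemma: svd evolution}.
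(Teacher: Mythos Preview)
Your proposal is correct and follows essentially the same approach as the paper: decompose via Lemma~\ref{lemma: unitary sandwich evolution}, count $\mathcal{O}(n)$ classical operations for the $2\times 2$ factor decompositions, $\mathcal{O}(n)$ single-qubit unitaries for $U_{H^1}$, and invoke Lemma~\ref{lemma: svd evolution} for the diagonal evolution. Your added remarks on reordering singular values and absorbing $\|H^1\|_2$ into $t$ are a bit more explicit than the paper's proof, but the argument is the same.
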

\proof{Proof.}
We start by computing the singular value decomposition of $H^1$. Considering $H^1$ is a tensor product of $n+2$ two-by-two matrices, the singular value decomposition of $H^1$ is the tensor product of the singular value decomposition of the $2$-by-$2$ matrices. For each $2$-by-$2$ matrix, the number of classical arithmetic operations needed for computing singular value decomposition is a constant. So in total $\mathcal{O}(n)$ classical arithmetic operations are needed. 
Then, we need to construct the circuit for unitary operation $U_{H^1}$, which is the tensor product of $n+2$ two-by-two matrices. So we need $\mathcal{O}(n)$ single qubit unitary circuits.
Finally, we need to construct the circuit for $e^{-it\Sigma_{H^1}}$. According to Lemma~\ref{lemma: svd evolution}, we need $\mathcal{O}(n)$ calls to the $p$-qubit quantum multiplier.
 \endproof
%
%

%%%%%%%%%%%%%%%%%%%%%%%%%%%%%%%%%%%%%%%%%%

%%%%%%%%%%%%%%%%%%%%%%%%%%%%%%%%%%%%%%%%%%
\subsection{Circuit for Type-2 Hamiltonian}
%%%%%%%%%%%%%%%%%%%%%%%%%%%%%%%%%%%%%%%%%%
Unlike Type-1 Hamiltonians, in general, Type-2 Hamiltonians are not in the tensor format,
which brings extra challenges for implementation. In this section, we introduce a method to avoid the issue and show that circuit for Type-2 Hamiltonian can be implemented in a similar way as Type-1 Hamiltonian.

First, we classically compute the singular value decomposition of $D_1 \otimes \cdots \otimes D_n$.
This step takes at most $\mathcal{O}\left( n \right)$ classical arithmetic operations. Denote the singular value decomposition of $P_0 \otimes D_1 \otimes \cdots \otimes D_n$ by
\begin{equation*}
\begin{aligned}
	P_0 \otimes D_1 \otimes \cdots \otimes D_n &=  U_{H^2} \Sigma_{H^2} V_{H^2}^\dag,
\end{aligned}
\end{equation*}
where
\begin{equation*}
\begin{aligned}
	U_{H^2} &= P_0 \otimes U_{D_1} \otimes \cdots \otimes U_{D_n}\\
	V_{H^2} &= I \otimes V_{D_1} \otimes \cdots \otimes V_{D_n}\\
	\Sigma_{H^2} &= \left\| H^2 \right\|_2 I\otimes \begin{bmatrix}
 	1&0\\0& \sigma_{H^2,1}
 	\end{bmatrix}\otimes \begin{bmatrix}
 	1&0\\0& \sigma_{H^2,2}
 	\end{bmatrix} \otimes \cdots 
 	\otimes \begin{bmatrix}
 	1&0\\0& \sigma_{H^2,n}
 	\end{bmatrix}
\end{aligned}
\end{equation*}
with $\sigma_{H^2,i}\in[0,1]$ for all $i\in[n]$.
Then, we have
\begin{equation*}
H^2=\begin{bmatrix}
            U_{H^2} & 0\\
            0& V_{H^2}
        \end{bmatrix} \times \begin{bmatrix}
            0 & \Sigma_{H^2}\\
            \Sigma_{H^2} &0
        \end{bmatrix} \times \begin{bmatrix}
            U_{H^2}^\dag &0\\
            0 & V_{H^2}^\dag
        \end{bmatrix}.
\end{equation*}
Note that the decomposition is not a singular value decomposition but it is useful for the implementation because the outer matrices are unitaries and the inner matrix is in tensor format, where
Lemma~\ref{lemma: unitary sandwich evolution} applies.
The lemma indicates that we need to implement the circuit of the outer unitary matrix and the circuit for the time evolution of the inner matrix.

For the outer matrix, it is unitary but it is not in the tensor format. We can further decompose it as follows
\begin{equation*}
    \begin{bmatrix}
            U_{H^2} & 0\\
            0& V_{H^2}
        \end{bmatrix} = L_{H^2} R_{H^2},
\end{equation*}
where 
\begin{equation*}
\begin{aligned}
L_{H^2} &= \begin{bmatrix}
            U_{H^2} & 0\\
            0& I_{2^{n+1}}
        \end{bmatrix},\ R_{H^2} = \begin{bmatrix}
            I_{2^{n+1}} & 0\\
            0& V_{H^2}
        \end{bmatrix}.
\end{aligned}
\end{equation*}
$L_{H^2}$ and $R_{H^2}$ are still unitary and represent the controlled version of $U_{H^2}$ and $V_{H^2}$, respectively.
Since $U_{H^2}$ and $V_{H^2}$ are tensor product of $2$-by-$2$ unitaries, we can further decompose $L_{H^2}$ and $R_{H^2}$ into
\begin{equation*}
\begin{aligned}
L_{H^2} &= \begin{bmatrix}
P_0 \otimes I \otimes \cdots \otimes I &0\\0&I_{2^{n+1}}
\end{bmatrix}
\begin{bmatrix}
I \otimes U_{D_1} \otimes \cdots \otimes I &0\\0&I_{2^{n+1}}
\end{bmatrix}
\cdots
\begin{bmatrix}
I \otimes I \otimes \cdots \otimes U_{D_n} &0\\0&I_{2^{n+1}}
\end{bmatrix}
\\
R_{H^2} &= \begin{bmatrix}
I_{2^{n+1}}&0\\0 &I \otimes I \otimes \cdots \otimes I
\end{bmatrix}
\begin{bmatrix}
I_{2^{n+1}}&0\\0&I \otimes V_{D_1} \otimes \cdots \otimes I 
\end{bmatrix}
\cdots
\begin{bmatrix}
I_{2^{n+1}}&0\\0&I \otimes I \otimes \cdots \otimes V_{D_n} 
\end{bmatrix}.
\end{aligned}
\end{equation*}
Each matrix in the decomposition of $L_{H^2}$ and $R_{H^2}$ is a single qubit unitary controlled by $n+1$ qubits and can be implemented efficiently.

According to Lemma~\ref{lemma: unitary sandwich evolution}, we still need to implement $e^{-it X\otimes \Sigma_{H^2}}$. However, $X\otimes \Sigma_{H^2}$ is not a diagonal matrix and thus cannot be implemented directly using Algorithm~\ref{algo: generic implement diagonal tensor}. To void this, we consider the eigenvalue decomposition of $X$ and have
\begin{equation}\label{eq: X Sig decompose}
\begin{aligned}
X \otimes\Sigma_{H^2} &= (U_{X} \Sigma_{X} U_{X}^\dag)\otimes\Sigma_{H^2}\\
&= (U_X \otimes I_{2^{n+1}})(\Sigma_X \otimes \Sigma_{H^2})(U_X \otimes I_{2^{n+1}})^\dag.
\end{aligned}
\end{equation}
Now we apply Lemma~\ref{lemma: unitary sandwich evolution} again since $U_X\otimes I_{2^{n+1}}$ is a unitary. Also, since $\Sigma_X = \begin{bmatrix}
1&0\\0&-1
\end{bmatrix}$, the matrix $\Sigma_X \otimes \Sigma_{H^2}$ is a tensor product of $2$-by-$2$ diagonal matrices, where Algorithm~\ref{algo: generic implement diagonal tensor} applies. We summarize the analysis above in the following lemma.

\begin{lemma}\label{lemma: Type 2 exact}
For Type-2 Hamiltonian $H^2$, it can be decomposed into
\begin{equation*}
H^2=\begin{bmatrix}
            U_{H^2} & 0\\
            0& V_{H^2}
        \end{bmatrix} \times \begin{bmatrix}
            0 & \Sigma_{H^2}\\
            \Sigma_{H^2} &0
        \end{bmatrix} \times \begin{bmatrix}
            U_{H^2}^\dag &0\\
            0 & V_{H^2}^\dag
        \end{bmatrix}
\end{equation*}
with
\begin{align}
	U_{H^2} &= P_0 \otimes U_{D_1} \otimes \cdots \otimes U_{D_n} \notag\\
	V_{H^2} &= I \otimes V_{D_1} \otimes \cdots \otimes V_{D_n} \notag\\
	\Sigma_{H^2} &= \left\| H^2 \right\|_2 I\otimes \begin{bmatrix}
 	1&0\\0& \sigma_{H^2,1}
 	\end{bmatrix}\otimes \begin{bmatrix}
 	1&0\\0& \sigma_{H^2,2}
 	\end{bmatrix} \otimes \cdots 
 	\otimes \begin{bmatrix}
 	1&0\\0& \sigma_{H^2,n}
 	\end{bmatrix}, \label{eq: type2 sv}
\end{align}
where $U_{H^2}\Sigma_{H^2}V_{H^2}^\dag$ is the singular value decomposition of $P_0 \otimes D_1 \otimes \cdots \otimes D_n$ and $\sigma_{H^2,i}\in[0,1]$ for all $i\in[n]$.
If $\sigma_{H^2,i} = \sum_{j=1}^p \nu_j^i 2^{-j}$  with $\nu_j^i$'s being binary for all $i\in [n]$, then there exists a quantum circuit that prepares $e^{-itH^2}$ for any $t\geq0$, using $\mathcal{O}(n)$ classical arithmetic operations, $\mathcal{O}(n)$ single qubit unitary circuits, and $\mathcal{O}(n)$ calls to $p$-qubit quantum multiplier.  
\end{lemma}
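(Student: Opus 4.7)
The plan is to follow the constructive decomposition laid out in the paragraphs immediately preceding the lemma and to bound each stage separately. First, I would observe that since $P_0 \otimes D_1 \otimes \cdots \otimes D_n$ is a tensor product of $n+1$ matrices in $\mathbb{C}^{2 \times 2}$, its singular value decomposition is the tensor product of the individual SVDs; each such $2$-by-$2$ SVD costs $\mathcal{O}(1)$ classical arithmetic operations, giving $\mathcal{O}(n)$ classical operations in total and yielding the stated form of $U_{H^2}$, $V_{H^2}$, and $\Sigma_{H^2}$.

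Next, I would reduce the problem to implementing two pieces: the outer unitary $W = \mathrm{diag}(U_{H^2}, V_{H^2})$ and the inner matrix $M = \begin{bmatrix} 0 & \Sigma_{H^2} \\ \Sigma_{H^2} & 0 \end{bmatrix}$, since $H^2 = W M W^\dagger$ and Lemma~\ref{lemma: unitary sandwich evolution} then gives $e^{-itH^2} = W e^{-itM} W^\dagger$. For $W$, I would use the factorization $W = L_{H^2} R_{H^2}$ from the main text and observe that each $L_{H^2}$ and $R_{H^2}$ can itself be written as a product of $n+1$ single-qubit unitaries each controlled by one ancilla qubit. This contributes $\mathcal{O}(n)$ controlled single-qubit gates of depth $\mathcal{O}(1)$.

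For the inner factor, I would rewrite $M = X \otimes \Sigma_{H^2}$ and invoke Lemma~\ref{lemma: unitary sandwich evolution} a second time through the decomposition in Eq.~\eqref{eq: X Sig decompose}: $e^{-itM} = (U_X \otimes I)\, e^{-it(\Sigma_X \otimes \Sigma_{H^2})}\, (U_X^\dagger \otimes I)$. Here $U_X \otimes I$ is again a single-qubit unitary on the new top qubit with depth $\mathcal{O}(1)$, and crucially $\Sigma_X \otimes \Sigma_{H^2}$ is now a tensor product of $n+2$ diagonal $2$-by-$2$ matrices whose diagonal entries all lie in $[-1,1]$. This puts the matrix exactly in the shape required by Algorithm~\ref{algo: generic implement diagonal tensor}, so the cost bound of Lemma~\ref{lemma: svd evolution} applies, contributing $\mathcal{O}(n)$ calls to the $p$-qubit quantum multiplier with depth $\mathcal{O}(np^3)$.

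The main obstacle, already anticipated by the exposition, is that $M$ is not in tensor-product form (it has a block anti-diagonal structure), so Algorithm~\ref{algo: generic implement diagonal tensor} cannot be applied directly; the identity $M = X \otimes \Sigma_{H^2}$ together with a second application of Lemma~\ref{lemma: unitary sandwich evolution} is what resolves this. Once all three contributions (classical SVD, the single-qubit unitaries making up $W$ and $U_X \otimes I$, and the diagonal-tensor evolution) are combined, the claimed totals of $\mathcal{O}(n)$ classical arithmetic operations, $\mathcal{O}(n)$ single-qubit unitary circuits, and $\mathcal{O}(n)$ calls to the $p$-qubit quantum multiplier follow immediately.
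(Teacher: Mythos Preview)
Your proposal is correct and follows essentially the same route as the paper: apply Lemma~\ref{lemma: unitary sandwich evolution} to reduce to the block-diagonal unitary $\mathrm{diag}(U_{H^2},V_{H^2})$ (implemented via the $L_{H^2}R_{H^2}$ factorization as $\mathcal{O}(n)$ controlled single-qubit gates) and the inner evolution $e^{-it(X\otimes\Sigma_{H^2})}$, then diagonalize the $X$ factor via Eq.~\eqref{eq: X Sig decompose} and hand the resulting diagonal tensor product $\Sigma_X\otimes\Sigma_{H^2}$ to Algorithm~\ref{algo: generic implement diagonal tensor} and Lemma~\ref{lemma: svd evolution}. Your accounting of the $\mathcal{O}(n)$ classical SVD cost is a bit more explicit than the paper's, but otherwise the arguments coincide.
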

\proof{Proof.}
According to the analysis in Section~\ref{sec: circuit design}, the decomposition exists. According to Lemma~\ref{lemma: unitary sandwich evolution}, we need to implement unitary circuit $\begin{bmatrix}
            U_{H^2} & 0\\
            0& V_{H^2}
        \end{bmatrix}$ and unitary circuit $exp\left\{-it \begin{bmatrix}
            0 & \Sigma_{H^2}\\
            \Sigma_{H^2} &0
        \end{bmatrix} \right\}$. The first one can be implemented using $\mathcal{O}(n)$ controlled single qubit unitary circuits. The second one can be decomposed using Eq.~\eqref{eq: X Sig decompose}. After decomposition, one can implement $U_X\otimes I_{2^{n+1}}$ using one controlled single qubit unitary circuit. As for the time evolution circuit $exp\left\{ -it \Sigma_X\otimes \Sigma_{H^2} \right\} $, the Hamiltonian $\Sigma_X\otimes \Sigma_{H^2}$ is in the same form of $\Sigma_{H^1}$, so we can use Algorithm~\ref{algo: generic implement diagonal tensor} to implement the time evolution circuit, which comes with at most $\mathcal{O}(n)$ calls to $p$-qubit quantum multiplier according to Lemma~\ref{lemma: svd evolution}.
 \endproof

In this section, we showed how to implement the time evolution for Type-1 and Type-2 Hamiltonian and give cost estimation for both of them. However, we assume that all the numbers $\sigma_{H^j, i}$ can be represented using $p$ qubits, which does not hold for the general case. We discuss this issue in Section \ref{sec: circuit analysis}.

\section{Circuit Cost Analysis}\label{sec: circuit analysis}
In this section, we analyze the total cost needed to use the circuits proposed in the previous section to implement Algorithm~\ref{algo: qlsa}. Our analysis are twofold: we first estimate the cost needed to implement the circuits when we do not have exact $p$-qubit binary representation of data, and then we estimate the Trotterization steps needed for Algorithm~\ref{algo: qlsa} to converge.

\subsection{Cost Analysis for Approximate Data}\label{sec: approximate data}
For the analysis in Section~\ref{sec: circuit design}, we assume $\sigma_{H^1,i}$ in Eq.~\eqref{eq: type1 sv} and $\sigma_{H^2, j}$ in Eq.~\eqref{eq: type2 sv} can be represented by $p$-qubit quantum states. This assumption does not hold for the general case.
In fact, we do not need an exact binary representation of these numbers. Here we introduce the definition of $\epsilon$-approximate binary representation of a number $\nu\in[0,1]$ and then discuss the behavior of quantum adder and multiplier on approximate binary representations. Finally, we discuss the relationship between the inexactness of data and the inexactness of the circuit constructed by Algorithm~\ref{algo: generic implement diagonal tensor}.
\begin{definition}\label{def: approximate binary}
Let $\nu\in[0,1]$, $\epsilon>0$, and $p \in \mathbb{Z}_+$. Let $\nu_i \in \{0, 1\}$ for all $i\in \{1,2,\dots,p\}$. We call 
\begin{equation*}
\tilde{\nu} = \nu_1 2^{-1} + \nu_{2} 2^{-2} + \dots + \nu_ 2^{-p}
\end{equation*}
a $p^{\rm th}$-order $\epsilon$-approximation of $\nu$, if
\begin{equation*}
|\nu - \tilde{\nu}| \leq \epsilon.
\end{equation*}
The corresponding binary representation is called a $p^{\rm th}$-order $\epsilon$-approximate binary representation of $\nu$.
\end{definition}
In this work, we use $p$ qubits to store a $p^{\rm th}$-order binary representation, so we also call the approximation a $p$-qubit approximation and the binary representation a  $p$-qubit binary representation.
\begin{lemma}\label{lemma: approximate binary}
Let $\nu\in[0,1]$, $\epsilon>0$, and $p_0=\lceil 1-\log_2(\epsilon) \rceil$. There exists $\nu_i \in \{0, 1\}$ for all $i\in \{1,2,\dots,p_0\}$ such that
\begin{equation*}
\tilde{\nu} = \nu_1 2^{-1} + \nu_{2} 2^{-2} + \dots + \nu_{p_0} 2^{-p_0}
\end{equation*}
is a $p_0$-qubit $\epsilon$-approximation $\nu$. 
\end{lemma}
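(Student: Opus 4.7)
The plan is to construct $\tilde{\nu}$ explicitly by truncating the binary expansion of $\nu$ and then check that the specified number of qubits $p_0$ is large enough to make the truncation error fit inside the tolerance $\epsilon$.

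First I would fix a binary expansion of $\nu \in [0,1]$, i.e., choose $\nu_i \in \{0,1\}$ so that $\nu = \sum_{i=1}^{\infty} \nu_i 2^{-i}$ (for dyadic rationals one picks the terminating expansion). Then define $\tilde{\nu} = \sum_{i=1}^{p_0} \nu_i 2^{-i}$, which is by construction a $p_0$-qubit binary representation in the sense of Definition~\ref{def: approximate binary}.

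Next I would estimate the error using a geometric tail bound,
\begin{equation*}
    0 \;\leq\; \nu - \tilde{\nu} \;=\; \sum_{i=p_0+1}^{\infty} \nu_i 2^{-i} \;\leq\; \sum_{i=p_0+1}^{\infty} 2^{-i} \;=\; 2^{-p_0}.
\end{equation*}
So it suffices to verify that the choice $p_0 = \lceil 1 - \log_2(\epsilon) \rceil$ yields $2^{-p_0} \leq \epsilon$. Since $p_0 \geq 1 - \log_2(\epsilon)$, we have $2^{-p_0} \leq 2^{\log_2(\epsilon) - 1} = \epsilon/2 \leq \epsilon$, which gives $|\nu - \tilde{\nu}| \leq \epsilon$ as required.

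There is essentially no obstacle here; the statement is a standard fact about binary truncation. The only thing to be mildly careful with is that the existence of a binary expansion (and the choice between the two expansions of a dyadic rational) is handled at the outset, and that the ceiling in the definition of $p_0$ is correctly propagated through the exponent inequality. Everything else is a one-line geometric-series estimate, so the write-up should be very short.
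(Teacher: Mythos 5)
Your proof is correct and follows essentially the same approach as the paper: truncate the binary expansion at $p_0$ digits and bound the tail by a geometric series. You are slightly more explicit than the paper in checking that $p_0 = \lceil 1 - \log_2\epsilon\rceil$ gives $2^{-p_0} \leq \epsilon/2 \leq \epsilon$, and in noting the choice of expansion for dyadic rationals, but these are only minor elaborations of the same argument.
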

\proof{Proof.}
Consider the exact binary representation of $\nu$:
\begin{equation*}
\begin{aligned}
\nu = \sum_{i=1}^{+\infty} \nu_i 2^{-i}
= \tilde{\nu} + \sum_{i=p_0+1}^{+\infty} \nu_i 2^{-i}.
\end{aligned}
\end{equation*}
It is obvious that
\begin{equation*}
|\nu - \tilde{\nu}| = \sum_{i=K+1}^{+\infty} \nu_i 2^{-i} \leq \sum_{i=p_0+1}^{+\infty} 2^{-i} \leq 2^{-p_0} \leq \epsilon.  
\end{equation*}
\endproof
Lemma \ref{lemma: approximate binary} indicates that we can obtain an $\epsilon$-approximate binary representation of a number using at most $\lceil 1-\log_2(\epsilon) \rceil$ qubits.
In the next lemma, we discuss the quantum multiplication of two $p$-qubit approximate numbers.
\begin{lemma}\label{lemma: 2 inexact multiplication}
Let $0<\epsilon<\epsilon^j<1$  for $j\in[2]$, $p=\lceil 1-\log_2(\epsilon) \rceil$, and $0\leq \nu^1, \nu^2\leq 1$. Let $ \tilde{\nu}^j = \sum_{i=1}^p \nu_i^j 2^{-i}$ be the $p$-qubit $\epsilon^j$-approximation of $\nu^j$ for $j\in[2]$. Let $\nu^3 = \sum_{i=1}^{2p} \nu^3_i 2^{-i}$ be $\tilde{\nu}^1\tilde{\nu}^2$ computed by the quantum multiplier introduced in \cite{ruiz2017quantum}. Then, $\nu^3$ is a $2p$-qubit $(\epsilon^1+\epsilon^2+\epsilon^1\epsilon^2)$-approximation of $\nu^1\nu^2$. Let $\tilde{\nu}^3 = \sum_{i=1}^{p} \nu^3_i 2^{-i}$ be an approximation of $\nu^3$. Then, $\tilde{\nu}^3$ is a $p$-qubit $(\epsilon +\epsilon^1+\epsilon^2+\epsilon^1\epsilon^2)$-approximation of $\nu^1\nu^2$.
\end{lemma}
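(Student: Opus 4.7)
The plan is to prove the lemma in two stages, corresponding to the two error bounds stated: first the $2p$-qubit bound for $\nu^3$, and then, by a truncation argument, the $p$-qubit bound for $\tilde{\nu}^3$.

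First, I would invoke Lemma~\ref{lemma: multiplier} with $K=0$ to argue that the quantum multiplier acting on the $p$-qubit inputs $\ket{\tilde{\nu}^1}$ and $\ket{\tilde{\nu}^2}$ produces a $2p$-qubit register whose binary value equals the product $\tilde{\nu}^1 \tilde{\nu}^2$ exactly; that is, $\nu^3 = \tilde{\nu}^1 \tilde{\nu}^2$ as a real number. This reduces the first claim to bounding $|\nu^1\nu^2 - \tilde{\nu}^1 \tilde{\nu}^2|$.

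Next, I would write $\nu^j = \tilde{\nu}^j + \delta^j$ for $j=1,2$, where by Definition~\ref{def: approximate binary} we have $0 \leq \delta^j \leq \epsilon^j$ (the truncation error is nonnegative, but we only need $|\delta^j|\leq \epsilon^j$). Expanding,
\begin{equation*}
\nu^1 \nu^2 - \tilde{\nu}^1 \tilde{\nu}^2 = \tilde{\nu}^1 \delta^2 + \tilde{\nu}^2 \delta^1 + \delta^1 \delta^2,
\end{equation*}
and using $\tilde{\nu}^1, \tilde{\nu}^2 \leq 1$, the triangle inequality yields
\begin{equation*}
|\nu^1 \nu^2 - \nu^3| \leq \epsilon^1 + \epsilon^2 + \epsilon^1 \epsilon^2,
\end{equation*}
which establishes the first part of the statement.

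For the second part, $\tilde{\nu}^3$ drops the last $p$ bits of $\nu^3$, so the truncation error is
\begin{equation*}
|\nu^3 - \tilde{\nu}^3| = \sum_{i=p+1}^{2p} \nu^3_i 2^{-i} \leq \sum_{i=p+1}^{\infty} 2^{-i} = 2^{-p},
\end{equation*}
and since $p = \lceil 1 - \log_2 \epsilon \rceil \geq 1 - \log_2 \epsilon$, we get $2^{-p} \leq \epsilon/2 \leq \epsilon$ (essentially the computation used in the proof of Lemma~\ref{lemma: approximate binary}). Combining with the first bound via the triangle inequality,
\begin{equation*}
|\nu^1 \nu^2 - \tilde{\nu}^3| \leq |\nu^1 \nu^2 - \nu^3| + |\nu^3 - \tilde{\nu}^3| \leq \epsilon + \epsilon^1 + \epsilon^2 + \epsilon^1 \epsilon^2,
\end{equation*}
finishing the proof. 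No step looks technically hard; the only subtlety worth flagging is checking that Lemma~\ref{lemma: multiplier} can be applied here with $K=0$ (i.e., that the multiplier, designed and analyzed for integers, is used on inputs that encode numbers in $[0,1]$ rather than integers), but this is exactly what that lemma allows once we interpret the same bit string with a shifted binary point.
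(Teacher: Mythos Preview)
Your proposal is correct and follows essentially the same route as the paper: write $\nu^3=\tilde\nu^1\tilde\nu^2$ exactly (via the multiplier lemma), expand the product to bound $|\nu^1\nu^2-\nu^3|$ by $\epsilon^1+\epsilon^2+\epsilon^1\epsilon^2$, and then add the truncation error $|\nu^3-\tilde\nu^3|\le 2^{-p}\le\epsilon$ via the triangle inequality. The only cosmetic difference is that the paper writes $\tilde\nu^j=\nu^j+\delta^j$ and bounds using $\nu^j\le 1$, whereas you write $\nu^j=\tilde\nu^j+\delta^j$ and bound using $\tilde\nu^j\le 1$; both are equivalent.
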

\proof{Proof.}
According to Lemma~\ref{lemma: approximate binary}, both $\tilde{\nu}^1$ and $\tilde{\nu}^2$ exist. Since $\tilde{\nu}^1$ is a $p$-qubit $\epsilon$-approximation of $\nu^1$, we have
$\tilde{\nu}^1 = \nu^1 + \delta^1$
with $|\delta^1|\leq \epsilon^1$. Similarly for $\nu^2$, we have $\tilde{\nu}^2 = \nu^2 + \delta^2$ with $|\delta^2|\leq \epsilon^2$. 
The quantum multiplier gives an exact $2p$-qubit binary representation of $\nu^3 = \tilde{\nu}^1\tilde{\nu}^2$. By their definitions, we have
\begin{equation*}
\begin{aligned}
\left| \nu^3 - \nu^1\nu^2  \right| &= \left| (\nu^1 + \delta^1)(\nu^2 + \delta^2) - \nu^1\nu^2  \right|\\
&=\left| \nu^1\delta^2 + \nu^2\delta^1 + \delta^1\delta^2   \right|\\
&\leq \epsilon^1+\epsilon^2+\epsilon^1\epsilon^2.
\end{aligned}
\end{equation*}
Since $\nu^3$ is represented by $2p$ qubits, one can discard the last $p$ less-important qubits, leaving a $p$-qubit approximation of $\nu^3$, i.e., $\tilde{\nu}^3$. As a result, we have
\begin{equation*}
\begin{aligned}
\left| \tilde{\nu}^3 - \nu^1\nu^2  \right| &= \left| \tilde{\nu}^3 - \nu^3 + \nu^3 - \nu^1\nu^2  \right|\\
&\leq \epsilon + \left| \nu^3 - \nu^1\nu^2  \right|,
\end{aligned}
\end{equation*}
where the last inequality holds because
\begin{equation*}
\left|\nu^3 - \tilde{\nu}^3 \right|= \sum_{i=p+1}^{2p} \nu^3_i 2^{-i} \leq \sum_{i=p+1}^{2p} 2^{-i} \leq 2^{-p} \leq \epsilon.  
\end{equation*}
\endproof
As indicated in the previous lemma, when multiplying two $p$-qubit numbers, one can either keep the $2p$-qubit result or approximate it by a $p$-qubit approximation. However, when considering the multiplication of multiple $p$-qubit numbers,  since we stick with the quantum multiplier introduced in \cite{ruiz2017quantum},  we have to consider the following issue. 
For example, if we add an extra $p$-qubit number $\nu^4$ and we want the result of $\nu^1\nu^2\nu^4$. Then, after we get $\nu^3$ from the quantum multiplier, we have to decide whether to use $\tilde{\nu}^3\nu^4$ as an approximate result, or to turn $\nu^4$ into a $2p$ qubit number and obtain a $4p$-qubit number as the result.
If we apply the second strategy, then number of qubits needed would grow exponentially as the amount of numbers grow. 
Instead, if we apply the first strategy, for every extra number, we need to use $p$ more qubits, which grows linearly. We call the first strategy by truncation strategy.
In this work, we stick with the quantum multiplier introduced in \cite{ruiz2017quantum} and choose the truncation strategy described above.

In the next lemma, we discuss the quantum multiplication of multiple $p$-qubit approximate numbers.
\begin{lemma}\label{lemma: approximate multiplier}
Let $K$ be a positive integer larger than 1. Let $0<\epsilon<1$ such that $K\epsilon<1/3$. Let $p=\lceil 1-\log_2(\epsilon) \rceil$.
Let $0\leq \nu^j\leq 1$ for $j\in[K]$
and $ \tilde{\nu}^j = \sum_{i=1}^p \nu_i^j 2^{-i}$ be the $p$-qubit $\epsilon$-approximation of $\nu^j$ for $j\in[K]$. Let $\nu^\ast$ be the result computed by the quantum multiplier introduced in \cite{ruiz2017quantum} following the truncation strategy. Then, $\nu^\ast$ is a $p$-qubit $\epsilon_K$-approximation of $\prod_{j=1}^K \nu^j$, where
\begin{equation*}
\begin{aligned}
\epsilon_K \leq 3K\epsilon.
\end{aligned}
\end{equation*}
\end{lemma}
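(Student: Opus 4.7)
The plan is to proceed by induction on $K$, tracking the error introduced at each step of the truncation strategy. Let me set up notation: define $m_1 = \tilde{\nu}^1$, and for $k \geq 2$, let $m_k$ denote the $p$-qubit number obtained by applying the quantum multiplier to $m_{k-1}$ and $\tilde{\nu}^k$ (which produces an exact $2p$-qubit product) and then truncating to the top $p$ qubits. Set $\epsilon_k := |m_k - \prod_{j=1}^k \nu^j|$, so the goal is to prove $\epsilon_k \leq 3k\epsilon$ for all $k \in [K]$.

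For the base case $k = 2$, Lemma~\ref{lemma: 2 inexact multiplication} applied with $\epsilon^1 = \epsilon^2 = \epsilon$ gives $\epsilon_2 \leq 3\epsilon + \epsilon^2 \leq 6\epsilon$, which equals $3 \cdot 2 \cdot \epsilon$ as needed (the hypothesis $K\epsilon < 1/3$ gives $\epsilon < 1$, so $\epsilon^2 \leq \epsilon$, and in fact much smaller). For the inductive step, I would decompose the error via the triangle inequality into a truncation part and a product-propagation part:
\begin{equation*}
\epsilon_k \;\leq\; |m_k - m_{k-1}\tilde{\nu}^k| \;+\; \Bigl|m_{k-1}\tilde{\nu}^k - \prod_{j=1}^{k}\nu^j\Bigr|.
\end{equation*}
The first term is the truncation of a $2p$-qubit number to $p$ qubits and is bounded by $2^{-p} \leq \epsilon$ by Lemma~\ref{lemma: approximate binary}. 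For the second term, writing $m_{k-1} = \prod_{j=1}^{k-1}\nu^j + \delta_{k-1}$ with $|\delta_{k-1}| \leq \epsilon_{k-1}$ and $\tilde{\nu}^k = \nu^k + \eta_k$ with $|\eta_k| \leq \epsilon$, and using $|\nu^k|, |\prod_{j<k}\nu^j| \leq 1$, a direct expansion yields a bound of $\epsilon + \epsilon_{k-1} + \epsilon\,\epsilon_{k-1}$. Combining gives the recursion
\begin{equation*}
\epsilon_k \;\leq\; 2\epsilon + (1+\epsilon)\,\epsilon_{k-1}.
\end{equation*}

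Applying the inductive hypothesis $\epsilon_{k-1} \leq 3(k-1)\epsilon$ yields
\begin{equation*}
\epsilon_k \;\leq\; 2\epsilon + 3(k-1)\epsilon + 3(k-1)\epsilon^2,
\end{equation*}
and it suffices to show $3(k-1)\epsilon^2 \leq \epsilon$, i.e.\ $3(k-1)\epsilon \leq 1$. This is precisely where the assumption $K\epsilon < 1/3$ enters: since $k \leq K$, we have $3(k-1)\epsilon < 3K\epsilon < 1$, closing the induction.

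The main (mild) obstacle is really just the bookkeeping: one must be careful that at each step the truncation error is at most $\epsilon$ (not $2^{-p}$ with a worse constant), and one must verify that Lemma~\ref{lemma: 2 inexact multiplication} applies even when one of the input errors equals $\epsilon$ (its hypothesis is stated with strict inequality $\epsilon < \epsilon^j$, but inspection of its proof shows equality is harmless). Everything else is linear error propagation, and the hypothesis $K\epsilon < 1/3$ is precisely tight for absorbing the quadratic cross-term $3(k-1)\epsilon^2$ into the $3k\epsilon$ budget.
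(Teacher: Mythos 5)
Your proof is correct and uses the same inductive strategy as the paper, built on the identical recursion $\epsilon_k \leq 2\epsilon + (1+\epsilon)\,\epsilon_{k-1}$ obtained from the truncation step plus error propagation through one multiplication. The only difference is bookkeeping: the paper carries the auxiliary sequence $f(k) = 2\sum_{i=1}^k (k\epsilon)^i$ as the inductive hypothesis and only at the end bounds $f(K) \leq 2K\epsilon/(1-K\epsilon) \leq 3K\epsilon$, whereas you induct directly on $\epsilon_k \leq 3k\epsilon$ and absorb the quadratic cross-term $3(k-1)\epsilon^2$ using $3(k-1)\epsilon < 3K\epsilon < 1$, which is slightly cleaner but logically equivalent.
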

\proof{Proof.}
We prove the lemma using induction. Define function $f(k) = 2\sum_{i=1}^k (k\epsilon)^i$, for $k=2,\dots.$ 
It is obvious that $f(k)\leq 2k\epsilon/(1-k\epsilon)$ when $0<k\epsilon<1$.
When $k=2$, according to Lemma~\ref{lemma: 2 inexact multiplication}, we have
\begin{equation*}
\epsilon_2 = 3\epsilon+\epsilon^2 \leq 2(2\epsilon + 4\epsilon^2) = f(2).
\end{equation*}
Let us assume that when $k=K$, $\nu^+$ is a $p$-qubit $\epsilon_K$-approximation of $\prod_{j=1}^K \nu^j$ with $\epsilon_K\leq f(K)$. Then, when $k=K+1$, according to Lemma~\ref{lemma: 2 inexact multiplication}, it follows that
\begin{equation*}
\begin{aligned}
\epsilon_{K+1} &= \epsilon+\epsilon_{K} + \epsilon + \epsilon_{K}\epsilon\\
&\leq 2\epsilon + f(K) + f(K)\epsilon\\
&= 2\epsilon + \left( 2K\epsilon + 2\sum_{i=2}^K (K\epsilon)^i \right)  + \left(2 \sum_{i=1}^{K-1}(K\epsilon)^i \epsilon + 2 K^K \epsilon^{K+1} \right) \\
&= 2(K+1)\epsilon + 2\sum_{i=2}^K(K^i + K^{i-1})\epsilon^i + 2K^K\epsilon^{K+1}\\
&\leq 2(K+1)\epsilon +2 \sum_{i=2}^K (K+1)^i\epsilon^i + 2(K+1)^{K+1}\epsilon^{K+1}\\
&= f(K+1).
\end{aligned}
\end{equation*}
So we conclude $\epsilon_K\leq f(K) \leq 2K\epsilon/(1-K\epsilon)\leq 3K\epsilon$.
 \endproof

\begin{lemma}\label{lemma: inexact data and circuit}
Let $0<\epsilon<\frac{1}{3n}$ and $p=\lceil 1-\log_2(\epsilon) \rceil$. Let $\tilde{\sigma}_{H^1, i}$ be a $p$-qubit $\epsilon$-approximation of $\sigma_{H^1, i}$ for all $i\in[n]$. When we use $\tilde{\sigma}_{H^1, i}$ in Algorithm~\ref{algo: qlsa} and obtain an inexact circuit $e^{-it\tilde{\Sigma}_{H^1}}$, the following bound holds
\begin{equation*}
\left\| e^{-it\tilde{\Sigma}_{H^1}} - e^{-it\Sigma_{H^1}}  \right\|_2 \leq t\epsilon.
\end{equation*}
\end{lemma}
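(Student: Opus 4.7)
The plan is to reduce the operator-norm bound to an entrywise comparison of complex exponentials, exploiting that both $\Sigma_{H^1}$ and $\tilde\Sigma_{H^1}$ are diagonal, and then to control those entrywise errors via the quantum-multiplier analysis of Lemma~\ref{lemma: approximate multiplier}.

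First I would observe that $\Sigma_{H^1}$, as written in Eq.~\eqref{eq: type1 sv}, is a tensor product of $2{\times}2$ diagonal matrices and hence itself diagonal; the same is true of $\tilde\Sigma_{H^1}$, obtained by substituting each $\sigma_{H^1,i}$ with its $p$-qubit approximation $\tilde\sigma_{H^1,i}$. Both exponentials are therefore diagonal, and their difference has spectral norm equal to
\begin{equation*}
\bigl\|e^{-it\tilde\Sigma_{H^1}}-e^{-it\Sigma_{H^1}}\bigr\|_2 \;=\; \max_{j}\bigl|e^{-it\tilde\Sigma_{H^1,jj}}-e^{-it\Sigma_{H^1,jj}}\bigr|.
\end{equation*}

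Next I would identify $\Sigma_{H^1,jj}$ as a signed product of $\|H^1\|_2$ with one factor drawn from each tensor slot (either $1$, some $\sigma_{H^1,i}$, or a Pauli eigenvalue $\pm 1$), and $\tilde\Sigma_{H^1,jj}$ as the same arithmetic expression produced inside Algorithm~\ref{algo: generic implement diagonal tensor} by sequential calls to the quantum multiplier of \cite{ruiz2017quantum} under the truncation strategy, fed with the approximations $\tilde\sigma_{H^1,i}$. The hypothesis $\epsilon<1/(3n)$ is precisely the precondition $K\epsilon<1/3$ of Lemma~\ref{lemma: approximate multiplier} with $K\le n$, which then delivers a uniform entrywise bound on $|\tilde\Sigma_{H^1,jj}-\Sigma_{H^1,jj}|$. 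Applying the elementary Lipschitz estimate $|e^{-ita}-e^{-itb}|\le t|a-b|$ (immediate from $|e^{i\theta}-1|\le|\theta|$) entrywise and taking the maximum over $j$ yields the claimed spectral-norm bound.

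The only piece of real bookkeeping is the accumulation of error across the $n$ sequential multiplier calls under the truncation strategy, and this is already packaged inside Lemma~\ref{lemma: approximate multiplier}; once that lemma is invoked, the remainder is a one-line Lipschitz calculation on eigenvalues. Unlike the Trotterization bounds that appear later in Section~\ref{sec: trotter cost analysis}, no commutator contributions arise here, because we are comparing exponentials of two matrices that are simultaneously diagonalized in the computational basis.
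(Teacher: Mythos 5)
Your structural setup — both matrices diagonal, so the spectral norm of the difference of exponentials is an entrywise max, and then a Lipschitz estimate $|e^{-ita}-e^{-itb}|\le t|a-b|$ on the scalar exponentials — is sound and is also what the paper does. However, the route you take through Lemma~\ref{lemma: approximate multiplier} does not deliver the stated constant. Each diagonal entry of $\Sigma_{H^1}$ is (up to the prefactor $\|H^1\|_2\le 1$ and a sign) a product of some subset of the $\sigma_{H^1,i}$, and Lemma~\ref{lemma: approximate multiplier} bounds the error of the truncated quantum-multiplier chain on $K\le n$ factors by $\epsilon_K\le 3K\epsilon$, not by $\epsilon$. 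Feeding this into the Lipschitz step yields $\|e^{-it\tilde\Sigma_{H^1}}-e^{-it\Sigma_{H^1}}\|_2\le 3nt\epsilon$, not $t\epsilon$; the factor of $3n$ is silently dropped when you assert "yields the claimed spectral-norm bound." So there is a genuine gap between the end of your argument and the stated inequality.

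For what it is worth, you are being more careful than the paper's own proof, which writes
$\|e^{-it\tilde\Sigma_{H^1}}-e^{-it\Sigma_{H^1}}\|_2\le \max_{i\in[n]}|e^{-it\tilde\sigma_{H^1,i}}-e^{-it\sigma_{H^1,i}}|$
and never engages with the fact that the diagonal entries of $\Sigma_{H^1}$ are products of the $\sigma_{H^1,i}$, nor with the truncation error of the multiplier. That first inequality does not hold as written (the left side maxes over $2^{n+2}$ product-valued diagonal entries, the right side over $n$ single factors), so the paper effectively bypasses the accumulation that you correctly make visible. To obtain a bound of the stated form with the argument you sketch, the lemma's conclusion would need to read $\mathcal{O}(nt\epsilon)$ (or the per-$\sigma_i$ precision would need to be tightened by a factor of $n$), a change that would propagate a harmless extra $\mathrm{poly}(n)=\mathrm{polylog}(N)$ factor into Theorem~\ref{theorem: main}.
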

\proof{Proof.}
Both $e^{-it\tilde{\Sigma}_{H^1}}$ and $e^{-it\Sigma_{H^1}}$ are diagonal matrices, it follows that
\begin{equation*}
\begin{aligned}
\left\| e^{-it\tilde{\Sigma}_{H^1}} - e^{-it\Sigma_{H^1}}  \right\|_2 &\leq \max_{i\in[n]} \left| e^{-it\tilde{\sigma}_{H^1,i}} - e^{-it\sigma_{H^1,i}}  \right|\\
&= \max_{i\in[n]} \left| (e^{-it(\tilde{\sigma}_{H^1,i}-\sigma_{H^1,i}) } - 1)e^{-it\sigma_{H^1,i}}  \right|\\
&= \max_{i\in[n]} \left| e^{-it(\tilde{\sigma}_{H^1,i}-\sigma_{H^1,i}) } - 1 \right|\\
&= \max_{i\in[n]} \sqrt{\left( e^{-it(\tilde{\sigma}_{H^1,i}-\sigma_{H^1,i}) } - 1 \right) \left( e^{-it(\tilde{\sigma}_{H^1,i}-\sigma_{H^1,i}) } - 1 \right)^\dag}\\
&= \max_{i\in[n]} \sqrt{2-2\cos\left(t \tilde{\sigma}_{H^1,i}- t\sigma_{H^1,i}\right)}\\
&\leq \max_{i\in[n]}t\left|\tilde{\sigma}_{H^1,i}- \sigma_{H^1,i}\right|\\
&\leq t\epsilon.  
\end{aligned}
\end{equation*}
\endproof

\subsection{Trotterization Cost Analysis}\label{sec: trotter cost analysis}
In Section~\ref{sec: circuit design}, we introduce how to implement the time evolution circuits of two types of Hamiltonian. When we use those circuits in the implementation of Algorithm~\ref{algo: qlsa}, we decompose the Hamiltonian \eqref{eq: qlsa Hamiltonian} into some structured Hamiltonians as shown in Eq.\eqref{eq: Hamiltonian decomposition}.
Then we use the time evolution circuits of these structured Hamiltonians to approximate the time evolution circuits used in Algorithm~\ref{algo: qlsa}, which is the Trotterization method. As mentioned in Section~\ref{sec: trotter}, %
Trotterization method introduces error into circuits and makes circuits inexact. Lemma~\ref{lemma: childs trotter main} provides a bound for the Trotterization error in terms of communicative factor $\tilde{\alpha}_{\rm comm}$. In the next lemma, we provide a bound for $\tilde{\alpha}_{\rm comm}$ when the Hamiltonian decomposition in Eq.~\eqref{eq: Hamiltonian decomposition} is used by the specific Trotterization method -- the first-order Lie-Trotter formula. 
\begin{lemma}\label{lemma: alpha communicative final}$\tilde{\alpha}_{\rm comm} \leq 2(1+m+2d^2 +2md^2)^2 = \mathcal{O}(1)$.
\end{lemma}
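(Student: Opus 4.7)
The plan is to combine the explicit commutator-scaling bound from Remark~\ref{remark: alpha communicative factor} with the summand count from Lemma~\ref{lemma: qlsa Hamiltonian decomposition count}, after establishing that every individual summand in the decomposition \eqref{eq: Hamiltonian decomposition} has spectral norm at most $1$. Concretely, Remark~\ref{remark: alpha communicative factor} already gives
\[
\tilde{\alpha}_{\rm comm} \;\leq\; 2\sum_{\gamma_1=1}^{\Gamma}\sum_{\gamma_2=1}^{\Gamma}\|M_{\gamma_1}\|_2\,\|M_{\gamma_2}\|_2,
\]
so if I can show $\|M_\gamma\|_2\leq 1$ for every summand, the bound collapses to $2\Gamma^2$ with $\Gamma=1+m+2d^2+2md^2$ by Lemma~\ref{lemma: qlsa Hamiltonian decomposition count}.

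The core step is therefore to verify the unit spectral-norm bound for each of the four families $H_1,H_2,H_3,H_4$. For $H_1(s)=(1-s)X\otimes Z\otimes(\otimes_l I)$ the norm is $(1-s)\leq 1$ since $s\in[0,1]$. Each summand of $H_2(s)$ has the form $sX\otimes X\otimes(\otimes_l A_{il})$ with norm $s\|\otimes_l A_{il}\|_2\leq s\leq 1$ by Assumption~\ref{assumption: A b norm 1}. For $H_3(s)$, each summand is an off-diagonal block matrix whose two blocks are conjugate transposes of one another (using $Z^\dag=Z$ and $(b_{j_1\cdot}b_{j_2\cdot}^\dag)^\dag=b_{j_2\cdot}b_{j_1\cdot}^\dag$), so its spectral norm equals that of the off-diagonal block; applying $\|b_{j_1\cdot}b_{j_2\cdot}^\dag\|_2\leq\|b_{j_1\cdot}\|_2\|b_{j_2\cdot}\|_2\leq 1$ and the prefactor $\tfrac{1}{2}(1-s)$ yields a bound of $\tfrac{1}{2}\leq 1$. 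The same argument handles $H_4(s)$: each summand has prefactor $\tfrac{1}{2}s$ and its off-diagonal block is bounded by $\|A_{i\cdot}\|_2\|b_{j_1\cdot}\|_2\|b_{j_2\cdot}\|_2\leq 1$, again giving spectral norm at most $\tfrac{1}{2}$.

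Plugging these bounds into the double sum gives $\tilde{\alpha}_{\rm comm}\leq 2\Gamma^2=2(1+m+2d^2+2md^2)^2$, and the $\mathcal{O}(1)$ conclusion follows from the assumption $m,d=\mathcal{O}(1)$. There is no real obstacle here beyond careful bookkeeping; the only subtlety is recognising that the Hermitian off-diagonal block structure of the Type-2 summands lets one read off the spectral norm directly from the single block, so that Assumption~\ref{assumption: A b norm 1} (and submultiplicativity of the spectral norm under tensor/matrix products) can be applied to each constituent of the tensor string.
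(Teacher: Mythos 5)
Your proof is correct and follows essentially the same route as the paper: combine the double-sum bound from Remark~\ref{remark: alpha communicative factor} with the summand count of Lemma~\ref{lemma: qlsa Hamiltonian decomposition count}, then check that every summand of each $H_i(s)$ has spectral norm at most $1$ (using the off-diagonal block-matrix structure for the Type-2 terms and Assumption~\ref{assumption: A b norm 1} for the tensor factors). The only inessential difference is that you retain the $s$ and $(1-s)$ prefactors, giving slightly tighter intermediate norm bounds, whereas the paper simply drops them; both arrive at the same $2(1+m+2d^2+2md^2)^2$ bound.
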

\proof{Proof.}
As discussed in Remark~\ref{remark: alpha communicative factor} and Lemma~\ref{lemma: qlsa Hamiltonian decomposition count}, we need to find the spectral norm of the $(m+1)$ Type-1 Hamiltonians and the $(2d^2 + 2md)$ Type-2 Hamiltonians.
According to Eq.~\eqref{eq: Hamiltonian decomposition}, $H(s)$ is decomposed into $4$ Hamiltonians, each Hamiltonian is further decomposed into Type-1 and Type-2 Hamiltonians.
For the single Type-1 Hamiltonian in $H_1(s)$, we have
\begin{equation*}
\|X\otimes Z\otimes (\otimes_{l=1}^n I)\|_2 = 1.
\end{equation*}
For the $m$ Type-1 Hamiltonians in $H_2(s)$, we have
\begin{equation*}
\|X\otimes X\otimes (\otimes_{l=1}^n A_{il})\|_2\leq 1
\end{equation*}
according to Assumption~\ref{assumption: A b norm 1}.

For the $2d^2$ Type-2 Hamiltonians in $H_3(s)$, denote the singular value decomposition of $Z\otimes b_{j_1 \cdot} b_{j_2 \cdot}^\dag$ by $U_{Zb} \Sigma_{Zb} V_{Zb}^\dag$, it follows that
\begin{equation*}
\begin{aligned}
\left\| \begin{bmatrix}
            0 & Z\otimes b_{j_1 \cdot} b_{j_2 \cdot}^\dag\\
            Z\otimes b_{j_2 \cdot} b_{j_1 \cdot}^\dag & 0
        \end{bmatrix} \right\|_2 &= \left\| \begin{bmatrix}
        U_{Zb} &0\\0&V_{Zb}
\end{bmatrix}
\begin{bmatrix}
0 & \Sigma_{Zb}\\ \Sigma_{Zb}^\dag &0
\end{bmatrix}
\begin{bmatrix}
U_{Zb}^\dag&0\\0&V_{Zb}^\dag
\end{bmatrix}           \right\|_2 \\
&= \left\| \begin{bmatrix}
0 & \Sigma_{Zb}\\ \Sigma_{Zb}^\dag &0
\end{bmatrix} \right\|_2\\
&= \left\| \Sigma_{Zb} \right\|_2\\
&\leq \left\| b_{j_1\cdot} b_{j_2 \cdot}^\dag  \right\|_2\\
&= \left\| b_{j_1\cdot} \right\|_2 \left\|b_{j_2 \cdot}^\dag  \right\|_2\\
&\leq 1,
\end{aligned}
\end{equation*}
where the last inequality holds because of Assumption~\ref{assumption: A b norm 1}.
Similar for the remaining Type-2 Hamiltonians in $H_3(s)$.

For the $2md^2$ Type-2 Hamiltonians in $H_4(s)$, notice that
\begin{equation*}
\left\| A_{i\cdot}b_{j_1\cdot} b_{j_2 \cdot}^\dag \right\|_2 \leq \left\| A_{i\cdot}\right\|_2 \left\|b_{j_1\cdot} b_{j_2 \cdot}^\dag \right\|_2\leq 1,
\end{equation*}
it follows that all the Type-2 Hamiltonians in $H_4(s)$ have spectral norm bounded by $1$.
Combine these bounds with Lemma~\ref{lemma: qlsa Hamiltonian decomposition count}, we have
\begin{equation*}
\tilde{\alpha}_{\rm comm} \leq 2(1+m+2d^2 +2md^2)^2.  
\end{equation*}
\endproof

\subsection{Total Cost Analysis}
Both inexact data and Trotterization method introduce errors into the unitary circuits. In this section, we discuss the relationship between the inexactness of circuits and the accuracy of final state and then estimate the total cost needed to implement Algorithm~\ref{algo: qlsa}.
\begin{lemma}\label{lemma: inexact unitary product error}
Let $K$ be a positive integer and $\epsilon>0$. Let $U_i$ for $i\in[K]$ be unitary circuits and $\tilde{U}_i$ for $i\in[K]$ be the corresponding inexact circuits such that $\Tilde{U}_i = U_i + \mathcal{E}_i$ for all $i =[K]$.
Then, the following inequality holds
\begin{equation*}
\left\| \prod_{i=1}^K \tilde{U}_i - \prod_{i=1}^K U_i \right\|_2 \leq \sum_{i=1}^K \|\mathcal{E}_i\|_2.
\end{equation*}
\end{lemma}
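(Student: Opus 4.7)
The plan is to prove this by a standard telescoping-sum argument combined with unitarity. Since the bound is stated with a sum of error norms and no multiplicative constants, I expect that the proof will rely on the fact that both $U_i$ and $\tilde{U}_i$ are quantum circuit unitaries, so their spectral norms equal $1$. I would make this observation explicit at the outset, noting that in this paper every $\tilde{U}_i$ is a physically implementable circuit and hence unitary with $\|\tilde{U}_i\|_2=1$.

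For the core argument I would use the telescoping identity
\begin{equation*}
\prod_{i=1}^K \tilde{U}_i - \prod_{i=1}^K U_i = \sum_{j=1}^K \Bigl(\tilde{U}_1 \cdots \tilde{U}_{j-1}\Bigr)\,\bigl(\tilde{U}_j - U_j\bigr)\,\Bigl(U_{j+1} \cdots U_K\Bigr),
\end{equation*}
which can be verified by noting that successive terms cancel pairwise. Applying the triangle inequality for the spectral norm together with submultiplicativity gives
\begin{equation*}
\left\| \prod_{i=1}^K \tilde{U}_i - \prod_{i=1}^K U_i \right\|_2 \leq \sum_{j=1}^K \|\tilde{U}_1\|_2 \cdots \|\tilde{U}_{j-1}\|_2 \,\|\mathcal{E}_j\|_2\, \|U_{j+1}\|_2 \cdots \|U_K\|_2.
\end{equation*}
Each factor other than $\|\mathcal{E}_j\|_2$ equals $1$ by unitarity, which immediately yields the claimed bound.

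An equivalent route, which I would mention as a cross-check, is induction on $K$. The $K=1$ case is just $\tilde{U}_1 - U_1 = \mathcal{E}_1$. For the inductive step, write $\prod_{i=1}^K \tilde{U}_i - \prod_{i=1}^K U_i = \tilde{U}_K\bigl(\prod_{i=1}^{K-1}\tilde{U}_i - \prod_{i=1}^{K-1}U_i\bigr) + \mathcal{E}_K \prod_{i=1}^{K-1} U_i$ (adjusting the ordering of the product to match the paper's convention), apply the triangle inequality, and use $\|\tilde{U}_K\|_2 = 1$ and $\|\prod_{i=1}^{K-1} U_i\|_2 = 1$.

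There is essentially no hard step here; the only subtlety is being careful with the ordering of the factors inside the telescoping sum so that each difference $\tilde{U}_j - U_j$ sits between a prefix of $\tilde{U}$'s and a suffix of $U$'s (or vice versa). The result is the quantum analogue of the classical fact that an $\ell_\infty$-error budget distributed across successive unitary transformations adds linearly, and it is this linear additivity that will later allow the Trotterization error and the data-approximation error to be combined into the overall complexity estimate of Theorem~\ref{theorem: simple main}.
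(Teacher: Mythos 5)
Your proposal is correct and uses essentially the same idea as the paper: the paper's proof is a recursive right-factoring of the unitary $(U_K+\mathcal{E}_K)$, which unwinds to exactly the telescoping sum you wrote, and both arguments hinge on the same observation that the $\tilde{U}_i$, being implemented circuits, are themselves unitary and hence have spectral norm one. Your explicit telescoping identity is a slightly more direct presentation of what the paper does by induction, but there is no substantive difference.
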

\proof{Proof.}
Although the circuits $\tilde{U}_i$ are inexact, they are still unitary operators. So the following identity holds
\begin{equation*}
\prod_{i=1}^k\left(U_i+\mathcal{E}_i\right) -  \prod_{i=1}^k U_i = \left(  \prod_{i=1}^{k-1}\left(U_i+\mathcal{E}_i\right) -  \prod_{i=1}^{k} U_i (U_k + \mathcal{E}_k)^\dag \right) (U_k+ \mathcal{E}_k)
\end{equation*}
for all $k\in[K]$ because $U_k+\mathcal{E}_k$ is unitary.
From the above identity, we have
\begin{equation*}
\begin{aligned}
\left\| \prod_{i=1}^K \tilde{U}_i - \prod_{i=1}^K U_i \right\|_2 &= \left\| \prod_{i=1}^K\left(U_i+\mathcal{E}_i\right) -  \prod_{i=1}^K U_i \right\|_2\\
&= \left\| \left(  \prod_{i=1}^{K-1}\left(U_i+\mathcal{E}_i\right) -  \prod_{i=1}^{K-1} U_i (U_K + \mathcal{E}_K)^\dag \right) (U_K+ \mathcal{E}_K)   \right\|_2\\
&= \left\|  \prod_{i=1}^{K-1}\left(U_i+\mathcal{E}_i\right) -  \prod_{i=1}^{K} U_i (U_K + \mathcal{E}_K)^\dag   \right\|_2\\
&= \left\|  \prod_{i=1}^{K-1}\left(U_i+\mathcal{E}_i\right) -  \prod_{i=1}^{K-1} U_i - \prod_{i=1}^{K} U_i\mathcal{E}_K^\dag   \right\|_2\\
&\leq \left\|  \prod_{i=1}^{K-1}\left(U_i+\mathcal{E}_i\right) -  \prod_{i=1}^{K-1} U_i \right\|_2 + \left\| \prod_{i=1}^{K} U_i\mathcal{E}_K^\dag   \right\|_2\\
&\leq \left\|  \prod_{i=1}^{K-1}\left(U_i+\mathcal{E}_i\right) -  \prod_{i=1}^{K-1} U_i \right\|_2 + \left\| \mathcal{E}_K   \right\|_2\\
&\quad \vdots\\
&\leq \sum_{i=1}^K \|\mathcal{E}_i\|_2,
\end{aligned}
\end{equation*}
where the third equality and the second inequality hold because unitary operators do not change spectral norm and the last inequality holds because of recursion. 
 \endproof

The following lemma comes from \cite{jennings2023efficient} discussing the trace distance between exact and inexact matrices when applied to the same density operator.
\begin{lemma}[Lemma 9 in \cite{jennings2023efficient}]\label{lemma: trace distance inequality}
Let $U_i$ for $i\in[2]$ be square matrices and $\tilde{U}_i$ for $i\in[2]$ be the corresponding approximate matrices such that 
$\tilde{U}_i = U_i + \mathcal{E}_i$ with $\|\mathcal{E}_i\|_2 \leq \epsilon$ for $i\in[2]$. Then for every density operator $\rho$, the following inequality holds
\begin{equation*}
\left\| \tilde{U}_1\rho \tilde{U}_2 - U_1\rho U_2  \right\|_1 \leq \epsilon(\|U_1\|_2 + \|U_2||_2) + \epsilon^2.
\end{equation*}
\end{lemma}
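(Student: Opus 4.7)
The plan is to expand the difference $\tilde{U}_1\rho\tilde{U}_2 - U_1\rho U_2$ using the assumed perturbations $\tilde{U}_i = U_i + \mathcal{E}_i$, and then bound each resulting term separately via sub-multiplicativity of the trace norm against the spectral norm.

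First I would substitute and expand:
\begin{equation*}
\tilde{U}_1\rho\tilde{U}_2 - U_1\rho U_2 = (U_1+\mathcal{E}_1)\rho(U_2+\mathcal{E}_2) - U_1\rho U_2 = U_1\rho\mathcal{E}_2 + \mathcal{E}_1\rho U_2 + \mathcal{E}_1\rho\mathcal{E}_2.
\end{equation*}
Applying the triangle inequality in the trace norm then reduces the task to bounding the three summands on the right.

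Next I would invoke the standard Hölder-type inequality $\|ABC\|_1 \leq \|A\|_2 \|B\|_1 \|C\|_2$ for each of the three terms. Since $\rho$ is a density operator, $\|\rho\|_1 = 1$, and by assumption $\|\mathcal{E}_i\|_2 \leq \epsilon$, so we obtain
\begin{equation*}
\|U_1\rho\mathcal{E}_2\|_1 \leq \epsilon\|U_1\|_2, \quad \|\mathcal{E}_1\rho U_2\|_1 \leq \epsilon\|U_2\|_2, \quad \|\mathcal{E}_1\rho\mathcal{E}_2\|_1 \leq \epsilon^2.
\end{equation*}
Summing the three bounds yields the claimed inequality $\|\tilde{U}_1\rho\tilde{U}_2 - U_1\rho U_2\|_1 \leq \epsilon(\|U_1\|_2 + \|U_2\|_2) + \epsilon^2$.

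Since the statement is essentially a direct perturbative expansion combined with a textbook matrix-norm inequality, there is no real obstacle. The only subtle point is to make sure the Hölder-type inequality is applied correctly with the right norm on each factor (spectral norm on the outer factors, trace norm on $\rho$), but this is standard and the proof is immediate once the expansion is in place.
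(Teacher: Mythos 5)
Your proof is correct and complete. The paper itself does not prove this lemma — it imports it as Lemma~9 of \cite{jennings2023efficient} — so there is no internal proof to compare against, but your argument is the standard one: expand $(U_1+\mathcal{E}_1)\rho(U_2+\mathcal{E}_2) - U_1\rho U_2$ into three cross terms, apply the triangle inequality in trace norm, and bound each term by $\|ABC\|_1 \le \|A\|_2\|B\|_1\|C\|_2$ together with $\|\rho\|_1 = 1$ and $\|\mathcal{E}_i\|_2 \le \epsilon$. The algebra and the use of the norm inequality are both applied correctly, and the bound falls out exactly as stated.
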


Now we are ready to provide the main theorem of this work.
\begin{theorem}\label{theorem: main}
Let $0<\epsilon\leq 1/(3n)$, $\epsilon_0 = \epsilon^2/(\kappa\log^2\kappa)$ and $p=\lceil 1- \log_2\epsilon_0 \rceil$. When using Trotter number $r= \mathcal{O}(m^2d^4\kappa^2/\epsilon)$ for Trotterization method and $p$-qubit $\epsilon_0$-approximation of $\sigma_i$ in Algorithm~\ref{algo: generic implement diagonal tensor}, our implementation of Algorithm~\ref{algo: qlsa} prepares an $\mathcal{O}(\epsilon)$-approximate solution using $\mathcal{T}$  classical arithmetic operations, $\mathcal{T}$ single qubit unitary circuits \textcolor{black}{ and their controlled version with gate depth $\mathcal{O}(1)$}, and $\mathcal{T}$ calls to $p$-qubit quantum multiplier \textcolor{black}{ and their controlled version with gate depth $\mathcal{O}(\mathcal{T}p^3)$}, where $\mathcal{T} =  \mathcal{O}({ \kappa^2log(N)\log^2(\kappa)/\epsilon^2})$. 
\end{theorem}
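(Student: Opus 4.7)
The plan is to establish correctness of the implementation by tracking three sources of error and then count gates, all while treating Algorithm~\ref{algo: qlsa} as a black box whose correctness (with the $q=\Theta(\log^2\kappa/\epsilon)$ discretization) is guaranteed by \cite{subacsi2019quantum}. Concretely, the exact unitary produced by Algorithm~\ref{algo: qlsa} is
\[
U_{\rm exact} \;=\; \prod_{j=1}^{q} e^{-i t_j H(s_j)},
\]
and I want to bound $\|U_{\rm impl} - U_{\rm exact}\|_2$ by $\mathcal{O}(\epsilon)$ for the circuit $U_{\rm impl}$ we actually build. Since an $\mathcal{O}(\epsilon)$ bound on the unitary transfers to an $\mathcal{O}(\epsilon)$ bound on the final state (via Lemma~\ref{lemma: trace distance inequality}, applied to the output density operator with the ancilla traced out), this yields the claimed $\mathcal{O}(\epsilon)$-approximate solution.

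I will introduce two intermediate unitaries. First, $U_{\rm Trot}$, obtained by replacing each $e^{-it_j H(s_j)}$ by its $r$-step first-order Lie-Trotter expansion across the $\mathcal{O}(md^2)$ Type-1/Type-2 summands of Lemma~\ref{lemma: qlsa Hamiltonian decomposition count}. By Lemma~\ref{lemma: alpha communicative final} we have $\tilde\alpha_{\rm comm}=\mathcal{O}(1)$ since $m,d=\mathcal{O}(1)$, and since $t_j \leq 2\pi/\sqrt{\Delta^\ast(s_j)}=\mathcal{O}(\kappa)$, Lemma~\ref{lemma: trotter number} applied with $p=1$ shows that the choice $r=\mathcal{O}(m^2 d^4\kappa^2/\epsilon)$ makes each local Trotter error at most $\mathcal{O}(\epsilon/q)$; summing over the $q$ segments with Lemma~\ref{lemma: inexact unitary product error} gives $\|U_{\rm Trot}-U_{\rm exact}\|_2=\mathcal{O}(\epsilon)$. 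Second, $U_{\rm impl}$ replaces every $e^{-it_j M_\gamma/r}$ in $U_{\rm Trot}$ by the circuit built from Lemma~\ref{lemma: Type 1 exact} or Lemma~\ref{lemma: Type 2 exact}, using only $p$-qubit $\epsilon_0$-approximations of the singular values $\sigma_{H^{\ast},i}$ and the truncated quantum multiplier of Lemma~\ref{lemma: approximate multiplier}. Combining Lemma~\ref{lemma: approximate multiplier} with Lemma~\ref{lemma: inexact data and circuit} bounds the per-summand data-induced error by $\mathcal{O}(t_j n \epsilon_0/r)$; a second application of Lemma~\ref{lemma: inexact unitary product error} across the $q\cdot r\cdot \mathcal{O}(md^2)$ factors yields a total bound of $\mathcal{O}(q \kappa n \epsilon_0)$. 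Plugging in $\epsilon_0=\epsilon^2/(\kappa\log^2\kappa)$ and $q=\Theta(\log^2\kappa/\epsilon)$ and using $n\epsilon \leq 1/3$ makes this $\mathcal{O}(\epsilon)$ as well. Triangle inequality closes the correctness argument.

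For the cost count, the initial state $\tfrac{\sqrt{2}}{2}(\ket{0}-\ket{1})\otimes\ket{b}$ is prepared in $\mathcal{O}(n)$ single-qubit gates since $\ket{b}$ has low tensor rank (this is exactly the observation flagged in the excerpt preceding the spectral gap). For the main loop, the total number of Type-1/Type-2 summand exponentials to be compiled is $\mathcal{T}_0 := q\cdot r\cdot \mathcal{O}(md^2) = \mathcal{O}(\kappa^2\log^2(\kappa)/\epsilon^2)$. Each such exponential, by Lemmas~\ref{lemma: Type 1 exact} and \ref{lemma: Type 2 exact}, needs $\mathcal{O}(n)=\mathcal{O}(\log N)$ classical arithmetic operations, $\mathcal{O}(n)$ single-qubit (or controlled single-qubit) unitaries of depth $\mathcal{O}(1)$, and $\mathcal{O}(n)$ calls to a $p$-qubit multiplier, each of depth $\mathcal{O}(p^3)$ by Lemma~\ref{lemma: multiplier gate}. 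Multiplying out gives exactly $\mathcal{T}=\mathcal{O}(\kappa^2\log(N)\log^2(\kappa)/\epsilon^2)$ operations and multiplier calls, with the overall multiplier depth $\mathcal{O}(\mathcal{T} p^3)$ reported in the statement.

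The main obstacle I anticipate is the careful bookkeeping in the error composition step: the $q$, $r$, and $\mathcal{O}(md^2)$ factors each appear both in the error bound and in the gate count, and one must check that the choice of $\epsilon_0$ (which sets $p$) together with the choice of $r$ simultaneously drives the Trotter error and the arithmetic error below $\epsilon$ without inflating $\mathcal{T}$ beyond the advertised scaling. A secondary subtlety is that the randomized times $t_j$ require a worst-case bound $t_j=\mathcal{O}(\kappa)$ uniformly in $j$, rather than a bound on $\sum_j t_j$; this is where Lemma~\ref{lemma: inexact unitary product error} is essential because it lets us bound errors factor-by-factor with spectral-norm unitaries, avoiding any averaging argument.
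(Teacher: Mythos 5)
Your proposal mirrors the paper's own proof in every structural respect: the same black-box use of the expected trace-distance bound from \cite{subacsi2019quantum}; the same split of implementation error into a Trotter term (Lemma~\ref{lemma: trotter number} together with the commutator bound of Lemma~\ref{lemma: alpha communicative final}) and an inexact-data term (Lemma~\ref{lemma: inexact data and circuit} together with Lemma~\ref{lemma: approximate multiplier}); the same error composition via Lemma~\ref{lemma: inexact unitary product error}, followed by Lemma~\ref{lemma: trace distance inequality} to transfer from operator norm to trace distance; the same gate count $\mathcal{T} = q \cdot r \cdot \mathcal{O}(mnd^2)$; and the same argument that initial-state preparation is negligible owing to the low tensor rank of $\ket{b}$.

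The one step that does not close as written is the Trotter accounting. For the first-order Lie--Trotter formula ($p=1$), Lemma~\ref{lemma: trotter number} yields per-segment error $\mathcal{O}(\delta)$ when $r = \Theta(\tilde\alpha_{\rm comm}\, t_j^2/\delta)$; substituting $\tilde\alpha_{\rm comm} = \mathcal{O}(m^2 d^4)$ and $t_j = \mathcal{O}(\kappa)$, your choice $r = \mathcal{O}(m^2 d^4 \kappa^2/\epsilon)$ gives per-segment error $\mathcal{O}(\epsilon)$, not the $\mathcal{O}(\epsilon/q)$ you assert. Summing over $q = \Theta(\log^2\kappa/\epsilon)$ segments via Lemma~\ref{lemma: inexact unitary product error} would then give total Trotter contribution $\mathcal{O}(q\epsilon) = \mathcal{O}(\log^2\kappa)$, not $\mathcal{O}(\epsilon)$; reaching per-segment error $\mathcal{O}(\epsilon^2/\log^2\kappa)$ would require $r$ larger by roughly a factor of $q$. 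The paper's own proof in fact chooses $r = \mathcal{O}(m^2 d^4 t_j^2/\epsilon^2)$ --- a factor $1/\epsilon$ larger than what Theorem~\ref{theorem: main} itself states --- precisely to target per-segment error $\mathcal{O}(\epsilon^2/\log^2\kappa)$, and this tension between the proof's $r$ and the stated $r$ and $\mathcal{T}$ is an inconsistency already present in the source. You should not present the $\mathcal{O}(\epsilon/q)$ per-segment bound as a direct consequence of Lemma~\ref{lemma: trotter number} with $r = \mathcal{O}(m^2 d^4 \kappa^2/\epsilon)$, since it is not; either adopt the larger $r$ and re-derive $\mathcal{T}$ from it, or explain why the smaller $r$ is nonetheless enough.
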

\proof{Proof.}
In Algorithm~\ref{algo: qlsa}, denote the initial state by $\ket{\psi_0}$ and the ideal final state by $\ket{\psi^\ast}$.
\textcolor{black}{We first discuss the cost to implement Algorithm~\ref{algo: qlsa} and finally demonstrate that the cost to prepare the initial is negligible.}
Notice that evolution time $t=(t_1,\dots,t_q)$ is a $q$-dimensional random variable, we denote the circuits by $U_j(t_j) = e^{-it_j H(s^j)}$ and $U(t) = \prod_{j=1}^q U_j(t_j)$. According to \cite{subacsi2019quantum}, when $U(t)$ is implemented exactly, then the expected trace distance between the actual and ideal final states is bounded by
\begin{equation*}
\mathbb{E}_t  \left\| U(t)\ket{\psi_0}\bra{\psi_0}U(t)^\dag - \ket{\psi^\ast}\bra{\psi^\ast} \right\|_1 \leq \epsilon.
\end{equation*}
In our implementation, each unitary circuit $U_j(t_j)$ is approximated by a unitary circuit $\tilde{U}_j(t_j)$ satisfying $\tilde{U}(t) = U(t) + \mathcal{E}(t)$.
It follows that
\begin{equation*}
\begin{aligned}
&\mathbb{E}_t  \left\| \tilde{U}(t)\ket{\psi_0}\bra{\psi_0}\tilde{U}(t)^\dag - \ket{\psi^\ast}\bra{\psi^\ast} \right\|_1\\
=& \mathbb{E}_t  \left\| \tilde{U}(t)\ket{\psi_0}\bra{\psi_0}\tilde{U}(t)^\dag - U(t)\ket{\psi_0}\bra{\psi_0}U(t)^\dag +U(t)\ket{\psi_0}\bra{\psi_0}U(t)^\dag - \ket{\psi^\ast}\bra{\psi^\ast} \right\|_1\\
\leq & \mathbb{E}_t  \left(\left\| \tilde{U}(t)\ket{\psi_0}\bra{\psi_0}\tilde{U}(t)^\dag - U(t)\ket{\psi_0}\bra{\psi_0}U(t)^\dag\right\|_1  + \left\| U(t)\ket{\psi_0}\bra{\psi_0}U(t)^\dag - \ket{\psi^\ast}\bra{\psi^\ast} \right\|_1\right)\\
\leq & \mathbb{E}_t \left\| \tilde{U}(t)\ket{\psi_0}\bra{\psi_0}\tilde{U}(t)^\dag - U(t)\ket{\psi_0}\bra{\psi_0}U(t)^\dag\right\|_1  + \epsilon\\
\leq & \mathbb{E}_t \left( 2\|\mathcal{E}(t)\|_2 + \|\mathcal{E}(t)\|_2^2 \right)+ \epsilon,
\end{aligned}
\end{equation*}
where the last inequality holds because of Lemma~\ref{lemma: trace distance inequality}.

As we know, $U(t)$ is the product of $q$ circuits $U_j(t_j)$. Each $U_j(t_j)$ is approximated by $\tilde{U}_j(t_j)$, whose inexactness comes from Trotterization method and inexact data. We use $\bar{U}_j(t_j)$ to denote the circuit that approximates $U_j(t_j)$ using the same Trotterization method with accurate data.
Let $r_j$ be the Trotter number for $U_j(t_j)$. Then
\begin{equation*}
\begin{aligned}
\left\| \tilde{U}_j(t_j) - U_j(t_j)  \right\|_2 &= \left\| \tilde{U}_j(t_j) - \bar{U}_j(t_j) + \bar{U}_j(t_j) - U_j(t_j)  \right\|_2\\
&\leq \left\| \tilde{U}_j(t_j) - \bar{U}_j(t_j)\right\|_2 + \left\| \bar{U}_j(t_j) - U_j(t_j)  \right\|_2
\end{aligned}
\end{equation*}
The first term measures the error introduced by inexact data. There are $r_j$ circuits in both $\tilde{U}_j(t_j)$ and $\bar{U}_j(t_j)$ since they are both Trotterized with Trotter number $r_j$, i.e.,
\begin{equation*}
\tilde{U}_j(t_j) = \prod_{l=1}^{r_j} \tilde{U}_j(t_j/r_j),\ \bar{U}_j(t_j) = \prod_{l=1}^{r_j} \bar{U}_j(t_j/{r_j}).
\end{equation*}
According to Lemma~\ref{lemma: inexact unitary product error} and Lemma~\ref{lemma: inexact data and circuit}, we have
\begin{equation*}
\begin{aligned}
\left\| \tilde{U}_j(t_j) - \bar{U}_j(t_j)\right\|_2 &\leq r_j \left\| \tilde{U}_j(t_j/r_j) - \bar{U}_j(t_j/r_j)\right\|_2\\
&\leq t_j \epsilon_0.
\end{aligned}
\end{equation*}
By the definition of $t_j$ in Algorithm~\ref{algo: qlsa}, $t_j\leq 2\pi\kappa$. It follows that
\begin{equation*}
\left\| \tilde{U}_j(t_j) - \bar{U}_j(t_j)\right\|_2 \leq 2\pi\epsilon^2/\log^2\kappa.
\end{equation*}
As for the second term, it measures the error introduced by Trotterization method. According to Lemma~\ref{lemma: trotter number} and Lemma~\ref{lemma: alpha communicative final}, we have $\|\bar{U}_j(t_j) - U_j(t_j)\|_2 = \mathcal{O}(\epsilon^2/\log^2\kappa)$, provided that Trotter number is bounded by $\mathcal{O}\left( {m^2d^4 t_j^2 }/{\epsilon^2}   \right)$. In this work, we choose Trotter number be the upper bound of this bound, which guarantees $\|\bar{U}_j(t_j) - U_j(t_j)\|_2 = \mathcal{O}(\epsilon^2/\log^2\kappa)$.
Put these together, we have
\begin{equation*}
\left\| \tilde{U}_j(t_j) - U_j(t_j)  \right\|_2 = \mathcal{O}(\epsilon^2/\log^2\kappa).
\end{equation*}
Following Lemma~\ref{lemma: inexact unitary product error}, we have
\begin{equation*}
\begin{aligned}
\mathbb{E}_t \left\| \mathcal{E}(t)  \right\|_2 &= \mathbb{E}_t \left\| \tilde{U}(t) - U(t)    \right\|_2\\
&\leq \mathbb{E}_t \sum_{j=1}^q \left\| \tilde{U}_j(t_j) - U_j(t_j)  \right\|_2 \\
&=\mathcal{O}(\epsilon),
\end{aligned}
\end{equation*}
where the last inequality holds by the definition of $q$ in Algorithm~\ref{algo: qlsa}. Finally, we have
\begin{equation*}
\mathbb{E}_t  \left\| \tilde{U}(t)\ket{\psi_0}\bra{\psi_0}\tilde{U}(t)^\dag - \ket{\psi^\ast}\bra{\psi^\ast} \right\|_1 \leq \mathbb{E}_t \left(2\|\mathcal{E}(t)\|_2 + \|\mathcal{E}(t)\|_2^2 \right) + \epsilon = \mathcal{O}(\epsilon).
\end{equation*}

In this implementation of Algorithm~\ref{algo: qlsa}, there are $q$ circuits $e^{-it_jH(s_j)}$ and each of them is approximated using $r$ Trottered circuits. Each Trottered circuits consists of $\mathcal{O}(md^2)$ time evolution circuits of Type-1 or Type-2 Hamiltonians. So the total cost of our implementation is $\mathcal{T}$  classical arithmetic operations, $\mathcal{T}$ single qubit unitary circuits that can be performed in parallel, and $\mathcal{T}$ calls to $p$-qubit quantum multiplier, where $\mathcal{T} = \mathcal{O}(qrmnd^2) = \mathcal{O}(nm^3 \kappa^2 d^6 \log^2\kappa/\epsilon^2)$. 

Finally, we discuss the cost to prepare the initial state.
\textcolor{black}{We first discuss the cost to prepare the initial state.
The initial state $\ket{\psi_0} =\frac{\sqrt{2}}{2}(\ket{0}-\ket{1}) \otimes \ket{b} $ has $\mathcal{O}(1)$ low tensor rank and thus can be implemented efficiently. We provide two methods to prepare the initial state and show that in both cases we can ignore the cost to prepare the initial state. The first method starts from the fact that, for each $b_j = \otimes_{k=1}^n b_{jk}$, we have
\begin{equation*}
    \ket{b_j} = \otimes_{k=1}^n\left(\frac{1}{\|b_{jk}\|_2}\begin{bmatrix}
        b_{jk} & \tilde b_{jk}
    \end{bmatrix} \ket{0}\right), 
\end{equation*}
where $\tilde b_{jk}$ is the complex conjugate of $-iYb_{jk}$ and $\frac{1}{\|b_{jk}\|_2}\begin{bmatrix}
        b_{jk} & \tilde b_{jk}
    \end{bmatrix}\in \mathbb{C}^{2\times 2}$ is unitary.
When $d=1$, we only need $\mathcal{O}(n)$ classical arithmetic operations and $\mathcal{O}(n)$ single qubit gates to prepare the initial state, which is negligible compared with the cost to implement Algorithm~\ref{algo: generic implement diagonal tensor}.
When $d=\mathcal{O}(1)$, instead of solving $Ax=b$ directly, we can solve $d$ linear systems $Ax = \otimes_{k=1}^n b_{jk}$ for $j\in[d]$. The total complexity would be $d$ times the complexity to implement Algorithm~\ref{algo: generic implement diagonal tensor} when $d=1$. This does not change the asymptotic total complexity since $d=\mathcal{O}(1)$.
One can also prepare the initial state directly without solving $d$ linear systems. Similar as  earlier, it takes $\mathcal{O}(dn)$ classical arithmetic operations and $\mathcal{O}(dn)$ single qubit gates \textcolor{black}{ with gate depth $\mathcal{O}(d)$ } to prepare $d$ unitaries $U_{b_j}$ such that $U_{b_j}\ket{0} = \ket{b_j}$ for $j\in[d]$. We claim the following circuit can be used to prepare the initial state. 
\[
\Qcircuit @C=1em @R=.7em {
    \lstick{\ket{0}_{d_0}} & \gate{H} & \ctrl{1} & \qw & \cdots & & \ctrl{1} & \qw & \gate{H} & \meter  \\
    \lstick{\ket{0}_n} & \qw & \gate{U_1} & \qw & \cdots & & \gate{U_{d}} & \qw & \qw & \qw & \lstick{\ket{\psi}}
}
\]
The first register has $d_0 = \lceil \log_2(d) \rceil$ qubits and represents states $\ket{1}$ to $\ket{d_0}$. For the controlled unitaries, if the first register is in state $\ket{j}$ with $j\in[d]$, then $U_{b_j}$ is applied on the second register. Finally, when state $\ket{1}$ is measured in the first register, the second register is in state $\ket{b}$. It takes $\mathcal{O}(d)$ tries to get $\ket{b}$, with which, the initial state can be easily prepared. In this case, the total cost to prepare the initial state is still negligible.
}

\endproof

Notice that, when $d>1$, one can solve $d$ linear systems $Ax=  \otimes_{k=1}^n b_{jk}$ for $j\in[d]$ and use the $d$ solutions to construct the solution of the original linear systems. So we have the following corollary.
\begin{corollary}
    Let $0<\epsilon\leq 1/(3n)$, $\epsilon_0 = \epsilon^2/(\kappa\log^2\kappa)$ and $p=\lceil 1- \log_2\epsilon_0 \rceil$. When using Trotter number $r= \mathcal{O}(m^2d^4\kappa^2/\epsilon)$ for Trotterization method and $p$-qubit $\epsilon_0$-approximation of $\sigma_i$ in Algorithm~\ref{algo: generic implement diagonal tensor}, one can use our implementation of Algorithm~\ref{algo: qlsa} to prepare an $\mathcal{O}(\epsilon)$-approximate solution using $\mathcal{T}$  classical arithmetic operations, $\mathcal{T}$ single qubit unitary circuits \textcolor{black}{ and their controlled version with gate depth $\mathcal{O}(1)$}, and $\mathcal{T}$ calls to $p$-qubit quantum multiplier \textcolor{black}{ and their controlled version with gate depth $\mathcal{O}(Tp^3)$}, where $\mathcal{T} = \mathcal{O}(nm^3 \kappa^2 d \log^2\kappa/\epsilon^2)$.
\end{corollary}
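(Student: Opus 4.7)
The plan is to apply Theorem~\ref{theorem: main} once per rank-one summand of $b$ and then glue the $d$ resulting quantum solutions into a single approximation of $\ket{x}$. Writing $b = \sum_{j=1}^d b_j$ with $b_j := \otimes_{k=1}^n b_{jk}$ and using linearity of $A$, we have $x = \sum_{j=1}^d x_j$ where $x_j = A^{-1} b_j$. Each subsystem $A x_j = b_j$ is itself an instance of \ref{def: LSP-TF} with the same coefficient matrix $A$ (hence the same $m$ and $\kappa$) but with tensor rank $d_{\mathrm{sub}} = 1$ on the right-hand side, so Theorem~\ref{theorem: main} applies to each of them with the parameter $d$ replaced by $1$.

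First I would run our implementation of Algorithm~\ref{algo: qlsa} $d$ times, once per subsystem, at target accuracy $\Theta(\epsilon/d) = \Theta(\epsilon)$ (the rescaling only affects constants because $d = \mathcal{O}(1)$). Propagating $d_{\mathrm{sub}} = 1$ through the proof of Theorem~\ref{theorem: main}: Lemma~\ref{lemma: qlsa Hamiltonian decomposition count} collapses the Hamiltonian decomposition from $\mathcal{O}(m d^2)$ structured summands per Trotter step to $\mathcal{O}(m)$, and the Trotter number actually needed to make each block $U_j(t_j)$ accurate to $\mathcal{O}(\epsilon^2/\log^2\kappa)$ is majorized by the $r = \mathcal{O}(m^2 d^4 \kappa^2/\epsilon)$ quoted in the statement, so using that loose bound is safe. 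The per-subsystem cost $\mathcal{T}_{\mathrm{sub}} = \mathcal{O}(q \cdot r \cdot m \cdot n)$ therefore reduces to $\mathcal{O}(n m^3 \kappa^2 \log^2(\kappa)/\epsilon^2)$, with the same per-call single-qubit and $p$-qubit multiplier breakdown (and gate depths) as in Theorem~\ref{theorem: main}.

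Next I would carry out the combination step. Using a short amplitude-estimation pass on the sub-solvers (or a classical pre-estimation), I would obtain polylogarithmically accurate estimates $\widetilde w_j$ of the weights $\|x_j\|_2$, prepare a $\lceil \log_2 d\rceil$-qubit ancilla in the superposition $\sum_j \widetilde w_j \ket{j}/\sqrt{\sum_j \widetilde w_j^{\,2}}$, coherently control the $d$ sub-solvers on the ancilla, and uncompute the control with the Hadamard-and-measure trick used for the initial state in the proof of Theorem~\ref{theorem: main}. Post-selecting the all-zero branch then outputs an $\mathcal{O}(\epsilon)$-approximation of $\ket{x}$. Since $d = \mathcal{O}(1)$, the post-selection succeeds with constant probability and the total cost is $d \cdot \mathcal{T}_{\mathrm{sub}} = \mathcal{O}(n m^3 \kappa^2 d \log^2(\kappa)/\epsilon^2)$, which matches the claimed $\mathcal{T}$.

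The main obstacle will be the accuracy bookkeeping for the combination step: I must verify that $\mathcal{O}(\epsilon/d)$ trace-distance errors in each $\ket{\widetilde x_j}$, together with polylogarithmic additive errors in the $\widetilde w_j$, compose to an $\mathcal{O}(\epsilon)$ trace-distance error in the normalized linear combination. This reduces to a triangle-inequality argument on the numerator together with a perturbation bound on the denominator $\|\sum_j x_j\|_2$ in terms of $\kappa^{-1}$ and the individual $\|x_j\|_2$; once this reconciliation is in place, the complexity count is a direct invocation of Theorem~\ref{theorem: main} applied $d$ times.
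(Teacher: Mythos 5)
Your approach is the same one the paper intends: the corollary is stated immediately after the remark that one can solve the $d$ rank-one systems $Ax=\otimes_{k=1}^n b_{jk}$ separately and combine, and the paper offers no further proof, so your propagation of $d_{\mathrm{sub}}=1$ through Theorem~\ref{theorem: main} and the resulting count $\mathcal{T}=d\cdot\mathcal{O}(nm^3\kappa^2\log^2(\kappa)/\epsilon^2)$ is exactly the intended argument, worked out in more detail than the paper gives.

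One caveat on your combination step: the claim that post-selection ``succeeds with constant probability since $d=\mathcal{O}(1)$'' is not automatic. The success probability of the controlled-sub-solver construction is governed by the ratio $\bigl\|\sum_j x_j\bigr\|_2^2 \big/ \bigl(\sum_j \widetilde w_j^{\,2}\bigr)$ (up to normalization), and the summands $x_j=A^{-1}b_j$ can interfere destructively: with $\|A\|_2\leq 1$ one only gets $\|x_j\|_2\leq \|A^{-1}\|_2\|b_j\|_2$ while $\|x\|_2\geq\|b\|_2$, so the ratio can degrade by a factor involving $\kappa$ and the degree of cancellation among the $b_j$, independently of $d$. You would need either an additional assumption (e.g.\ a lower bound on $\|b\|_2/\sum_j\|b_j\|_2$) or amplitude amplification, at a cost you would have to account for. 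The paper is silent on this point as well, so your proposal is no weaker than the paper's own justification, but the gap is real if the corollary is read as producing a single quantum state $\ket{x}$ rather than $d$ separate states $\ket{x_j}$ plus classical weights.
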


%%%%%%%%%%%%%%%%%%%%%%%%%%%%%
%\section{Numerical Results}
%%%%%%%%%%%%%%%%%%%%%%%%%%%%%

%%%%%%%%%%%%%%%%%%%%%%%%%%%%%

%%%%%%%%%%%%%%%%%%%%%%%%%%%%%
\section{Conclusion}
%%%%%%%%%%%%%%%%%%%%%%%%%%%%%
In this work, we study using an adiabatic inspired quantum linear system algorithm to solve linear system problem in tensor format. We focus on a class of linear systems whose matrices consist of a linear combination of tensor products of Hermitian $2$-by-$2$ matrices and linear system vector consists of a linear combinations of tensor products of $2$-dimensional vectors. 
We explicitly describe all the quantum circuits components used in the implementation of the QLSA. The implementation only uses single qubit unitary circuits, $p$-qubit quantum multipliers, and their controlled versions. The number of classical arithmetic operations and the number of these gates are polynomial of $n$, $m$, and $d$, and thus polylogarithmic in the dimension of the linear system. 
Considering the execution time for each of these gates is $\mathcal{O}(1)$, the total time complexity of the implementation is polylogarithmic in the problem dimension, which is better than any classical algorithm for general linear system problems. 
When compared with the classical algorithm designed for linear system in tensor format, our total complexity is comparable to the single step complexity of the classical algorithm proposed by \cite{ballani2013projection} in terms of the problem dimension. 

Possible interesting generalizations of our method would be to extend these results to higher dimensional $A_{ik}$ and $b_{jk}$ and to cases where $A_{ik}$ and/or $b_{jk}$ are in different sub-spaces. It is also worth continuing investigating how to obtain better dependence on condition number and accuracy for the structured problems.

\section*{Acknowledgement}
{The authors thank Faisal Alam and Muqing Zheng for helpful discussions. 
{This work was supported by Defense Advanced Research Projects Agency as part of the project W911NF2010022: {\em The Quantum
Computing Revolution and Optimization: Challenges and Opportunities.}}
{This work was also supported by National Science Foundation CAREER DMS-2143915 and by U.S. Department of Energy/Office of Electricity Advanced Grid Modeling program.}
{\textcolor{black}{This research used resources of the Oak Ridge Leadership Computing Facility, which is a
DOE Office of Science User Facility supported under Contract DE-AC05-00OR22725.}}
}
                
%=========================
% Bibliography
%=========================
\bibliographystyle{abbrvnat}
\bibliography{bib}

\begin{thebibliography}{31}
\providecommand{\natexlab}[1]{#1}
\providecommand{\url}[1]{\texttt{#1}}
\expandafter\ifx\csname urlstyle\endcsname\relax
  \providecommand{\doi}[1]{doi: #1}\else
  \providecommand{\doi}{doi: \begingroup \urlstyle{rm}\Url}\fi

\bibitem[Abbas et~al.(2023)Abbas, Ambainis, Augustino, B{\"a}rtschi, Buhrman, Coffrin, Cortiana, Dunjko, Egger, Elmegreen, et~al.]{abbas2023quantum}
A.~Abbas, A.~Ambainis, B.~Augustino, A.~B{\"a}rtschi, H.~Buhrman, C.~Coffrin, G.~Cortiana, V.~Dunjko, D.~J. Egger, B.~G. Elmegreen, et~al.
\newblock Quantum optimization: Potential, challenges, and the path forward.
\newblock \emph{arXiv preprint arXiv:2312.02279}, 2023.

\bibitem[An and Lin(2022)]{an2022quantum}
D.~An and L.~Lin.
\newblock Quantum linear system solver based on time-optimal adiabatic quantum computing and quantum approximate optimization algorithm.
\newblock \emph{ACM Transactions on Quantum Computing}, 3\penalty0 (2):\penalty0 1--28, 2022.

\bibitem[Apers and Gribling(2023)]{apers2023quantum}
S.~Apers and S.~Gribling.
\newblock Quantum speedups for linear programming via interior point methods.
\newblock \emph{arXiv preprint arXiv:2311.03215}, 2023.

\bibitem[Augustino et~al.(2023)Augustino, Nannicini, Terlaky, and Zuluaga]{augustino2023quantum}
B.~Augustino, G.~Nannicini, T.~Terlaky, and L.~F. Zuluaga.
\newblock Quantum interior point methods for semidefinite optimization.
\newblock \emph{Quantum}, 7:\penalty0 1110, 2023.

\bibitem[Ballani and Grasedyck(2013)]{ballani2013projection}
J.~Ballani and L.~Grasedyck.
\newblock A projection method to solve linear systems in tensor format.
\newblock \emph{Numerical Linear Algebra with Applications}, 20\penalty0 (1):\penalty0 27--43, 2013.

\bibitem[Berry et~al.(2007)Berry, Ahokas, Cleve, and Sanders]{berry2007efficient}
D.~W. Berry, G.~Ahokas, R.~Cleve, and B.~C. Sanders.
\newblock Efficient quantum algorithms for simulating sparse hamiltonians.
\newblock \emph{Communications in Mathematical Physics}, 270:\penalty0 359--371, 2007.

\bibitem[Boixo et~al.(2009)Boixo, Knill, and Somma]{boixo2009eigenpath}
S.~Boixo, E.~Knill, and R.~D. Somma.
\newblock Eigenpath traversal by phase randomization.
\newblock \emph{Quantum Info. Comput.}, 9\penalty0 (9\&10):\penalty0 833--855, 2009.

\bibitem[Casares and Martin-Delgado(2020)]{casares2020quantum}
P.~A. Casares and M.~A. Martin-Delgado.
\newblock A quantum interior-point predictor--corrector algorithm for linear programming.
\newblock \emph{Journal of Physics A: Mathematical and Theoretical}, 53\penalty0 (44):\penalty0 445305, 2020.

\bibitem[Chakraborty et~al.(2018)Chakraborty, Gily{\'e}n, and Jeffery]{chakraborty2018power}
S.~Chakraborty, A.~Gily{\'e}n, and S.~Jeffery.
\newblock The power of block-encoded matrix powers: improved regression techniques via faster {H}amiltonian simulation.
\newblock \emph{arXiv preprint arXiv:1804.01973}, 2018.

\bibitem[Childs and Su(2019)]{childs2019nearly}
A.~M. Childs and Y.~Su.
\newblock Nearly optimal lattice simulation by product formulas.
\newblock \emph{Physical review letters}, 123\penalty0 (5):\penalty0 050503, 2019.

\bibitem[Childs et~al.(2017)Childs, Kothari, and Somma]{childs2017quantum}
A.~M. Childs, R.~Kothari, and R.~D. Somma.
\newblock Quantum algorithm for systems of linear equations with exponentially improved dependence on precision.
\newblock \emph{SIAM Journal on Computing}, 46\penalty0 (6):\penalty0 1920--1950, 2017.

\bibitem[Childs et~al.(2021)Childs, Su, Tran, Wiebe, and Zhu]{childs2021theory}
A.~M. Childs, Y.~Su, M.~C. Tran, N.~Wiebe, and S.~Zhu.
\newblock Theory of {T}rotter error with commutator scaling.
\newblock \emph{Physical Review X}, 11\penalty0 (1):\penalty0 011020, 2021.

\bibitem[Costa et~al.(2022)Costa, An, Sanders, Su, Babbush, and Berry]{costa2022optimal}
P.~C. Costa, D.~An, Y.~R. Sanders, Y.~Su, R.~Babbush, and D.~W. Berry.
\newblock Optimal scaling quantum linear-systems solver via discrete adiabatic theorem.
\newblock \emph{PRX Quantum}, 3\penalty0 (4):\penalty0 040303, 2022.

\bibitem[Dalzell et~al.(2023)Dalzell, McArdle, Berta, Bienias, Chen, Gily{\'e}n, Hann, Kastoryano, Khabiboulline, Kubica, et~al.]{dalzell2310quantum}
A.~M. Dalzell, S.~McArdle, M.~Berta, P.~Bienias, C.-F. Chen, A.~Gily{\'e}n, C.~T. Hann, M.~J. Kastoryano, E.~T. Khabiboulline, A.~Kubica, et~al.
\newblock Quantum algorithms: A survey of applications and end-to-end complexities.
\newblock \emph{arXiv preprint arXiv:2310.03011}, 2023.

\bibitem[Deutsch and Jozsa(1992)]{deutsch1992rapid}
D.~Deutsch and R.~Jozsa.
\newblock Rapid solution of problems by quantum computation.
\newblock \emph{Proceedings of the Royal Society of London. Series A: Mathematical and Physical Sciences}, 439\penalty0 (1907):\penalty0 553--558, 1992.

\bibitem[Draper(2000)]{draper2000addition}
T.~G. Draper.
\newblock Addition on a quantum computer.
\newblock \emph{arXiv preprint quant-ph/0008033}, 2000.

\bibitem[Farhi et~al.(2014)Farhi, Goldstone, and Gutmann]{farhi2014quantum}
E.~Farhi, J.~Goldstone, and S.~Gutmann.
\newblock A quantum approximate optimization algorithm.
\newblock \emph{arXiv preprint arXiv:1411.4028}, 2014.

\bibitem[Grasedyck et~al.(2013)Grasedyck, Kressner, and Tobler]{grasedyck2013literature}
L.~Grasedyck, D.~Kressner, and C.~Tobler.
\newblock A literature survey of low-rank tensor approximation techniques.
\newblock \emph{GAMM-Mitteilungen}, 36\penalty0 (1):\penalty0 53--78, 2013.

\bibitem[Grover(1996)]{grover1996fast}
L.~K. Grover.
\newblock A fast quantum mechanical algorithm for database search.
\newblock In \emph{Proceedings of the twenty-eighth annual ACM symposium on Theory of computing}, pages 212--219, 1996.

\bibitem[Harrow et~al.(2009)Harrow, Hassidim, and Lloyd]{harrow2009quantum}
A.~W. Harrow, A.~Hassidim, and S.~Lloyd.
\newblock Quantum algorithm for linear systems of equations.
\newblock \emph{Physical Review Letters}, 103\penalty0 (15):\penalty0 150502, 2009.

\bibitem[Jennings et~al.(2023{\natexlab{a}})Jennings, Lostaglio, Lowrie, Pallister, and Sornborger]{jennings2023cost}
D.~Jennings, M.~Lostaglio, R.~B. Lowrie, S.~Pallister, and A.~T. Sornborger.
\newblock The cost of solving linear differential equations on a quantum computer: fast-forwarding to explicit resource counts.
\newblock \emph{arXiv preprint arXiv:2309.07881}, 2023{\natexlab{a}}.

\bibitem[Jennings et~al.(2023{\natexlab{b}})Jennings, Lostaglio, Pallister, Sornborger, and Suba{\c{s}}{\i}]{jennings2023efficient}
D.~Jennings, M.~Lostaglio, S.~Pallister, A.~T. Sornborger, and Y.~Suba{\c{s}}{\i}.
\newblock Efficient quantum linear solver algorithm with detailed running costs.
\newblock \emph{arXiv preprint arXiv:2305.11352}, 2023{\natexlab{b}}.

\bibitem[Krovi(2023)]{krovi2023improved}
H.~Krovi.
\newblock Improved quantum algorithms for linear and nonlinear differential equations.
\newblock \emph{Quantum}, 7:\penalty0 913, 2023.

\bibitem[Liu et~al.(2021)Liu, Kolden, Krovi, Loureiro, Trivisa, and Childs]{liu2021efficient}
J.-P. Liu, H.~{\O}. Kolden, H.~K. Krovi, N.~F. Loureiro, K.~Trivisa, and A.~M. Childs.
\newblock Efficient quantum algorithm for dissipative nonlinear differential equations.
\newblock \emph{Proceedings of the National Academy of Sciences}, 118\penalty0 (35):\penalty0 e2026805118, 2021.

\bibitem[Lloyd(1996)]{lloyd1996universal}
S.~Lloyd.
\newblock Universal quantum simulators.
\newblock \emph{Science}, 273\penalty0 (5278):\penalty0 1073--1078, 1996.

\bibitem[Mohammadisiahroudi et~al.(2023)Mohammadisiahroudi, Fakhimi, Wu, and Terlaky]{mohammadisiahroudi2023inexact}
M.~Mohammadisiahroudi, R.~Fakhimi, Z.~Wu, and T.~Terlaky.
\newblock An inexact feasible interior point method for linear optimization with high adaptability to quantum computers.
\newblock \emph{arXiv preprint arXiv:2307.14445}, 2023.

\bibitem[Nielsen and Chuang(2010)]{nielsen2010quantum}
M.~A. Nielsen and I.~L. Chuang.
\newblock \emph{Quantum Computation and Quantum Information}.
\newblock Cambridge University Press, 2010.

\bibitem[Ruiz-Perez and Garcia-Escartin(2017)]{ruiz2017quantum}
L.~Ruiz-Perez and J.~C. Garcia-Escartin.
\newblock Quantum arithmetic with the quantum {F}ourier transform.
\newblock \emph{Quantum Information Processing}, 16:\penalty0 1--14, 2017.

\bibitem[Shor(1994)]{shor1994algorithms}
P.~W. Shor.
\newblock Algorithms for quantum computation: discrete logarithms and factoring.
\newblock In \emph{Proceedings 35th Annual Symposium on Foundations of Computer Science}, pages 124--134. IEEE, 1994.

\bibitem[Suba{\c{s}}{\i} et~al.(2019)Suba{\c{s}}{\i}, Somma, and Orsucci]{subacsi2019quantum}
Y.~Suba{\c{s}}{\i}, R.~D. Somma, and D.~Orsucci.
\newblock Quantum algorithms for systems of linear equations inspired by adiabatic quantum computing.
\newblock \emph{Physical Review Letters}, 122\penalty0 (6):\penalty0 060504, 2019.

\bibitem[Wu et~al.(2023)Wu, Mohammadisiahroudi, Augustino, Yang, and Terlaky]{zeguanlcqo}
Z.~Wu, M.~Mohammadisiahroudi, B.~Augustino, X.~Yang, and T.~Terlaky.
\newblock An inexact feasible quantum interior point method for linearly constrained quadratic optimization.
\newblock \emph{Entropy}, 25\penalty0 (2), 2023.

\end{thebibliography}

\end{document}